\title{Modal Dependent Type Theory and Dependent Right Adjoints}
\author[Birkedal, Clouston, Mannaa, M{\o}gelberg, Pitts, and Spitters]
  {L\ls A\ls R\ls S\ns B\ls I\ls R\ls K\ls E\ls D\ls A\ls L$^1$ \ns
   R\ls A\ls N\ls A\ls L\ls D\ns C\ls L\ls O\ls U\ls S\ls T\ls O\ls N$^2$\ns
   B\ls A\ls S\ls S\ls E\ls L\ns M\ls A\ls N\ls N\ls A\ls A$^3$\ns\cr
   R\ls A\ls S\ls M\ls U\ls S\ns E\ls J\ls L\ls E\ls R\ls S\ns M\ls {\O}\ls G\ls E\ls L\ls B\ls E\ls R\ls G$^4$\ns
   A\ls N\ls D\ls R\ls E\ls W\ns M.\ns P\ls I\ls T\ls T\ls S$^5$\ns
   B\ls A\ls S\ns S\ls P\ls I\ls T\ls T\ls E\ls R\ls S$^1$
   \addressbreak
   \addressbreak
   $^1$ Department of Computer Science, Aarhus University, Denmark
\addressbreak
$^2$ Research School of Computer Science, Australian National University, Australia
\addressbreak 
$^3$ Concordium, Denmark
\addressbreak
$^4$ IT University of Copenhagen, Denmark
\addressbreak 
$^5$ Department of Computer Science and Technology, University of Cambridge, UK
}
\date{\today}
\newtheorem{theorem}{Theorem}
\newtheorem{lemma}[theorem]{Lemma}
\newtheorem{proposition}[theorem]{Proposition}
\newtheorem{definition}[theorem]{Definition}
\newtheorem{corollary}[theorem]{Corollary}
\newtheorem{remark}[theorem]{Remark}
\newcommand{\cat}[1]{\mathbf{#1}}
\newcommand{\id}{\mathsf{id}}
\newcommand{\op}[1]{#1^{\mathsf{op}}}
\newcommand{\inv}[1]{#1^{-1}}
\newcommand{\adj}{\dashv}
\newcommand{\Emp}{\diamond}
\newcommand{\ent}{\vdash}
\newcommand{\fun}{\rightarrow}
\newcommand{\comp}{\circ}
\newcommand{\T}{\langle\rangle}
\newcommand{\U}{\mathsf{U}}
\newcommand{\code}[1]{\ulcorner#1\urcorner}
\DeclareMathOperator{\E}{\mathsf{E}}
\newcommand{\pr}{\mathsf{pr}}
\renewcommand{\r}{\mathsf{r}}
\renewcommand{\l}{\ell}
\DeclareMathOperator{\R}{\mathsf{R}}
\DeclareMathOperator{\Ran}{\mathrm{Ran}}
\newcommand{\Rl}{\mathsf{Rl}}
\newcommand{\pullbackcorner}[1][dr]{\save*!/#1-1.4pc/#1:(-1,1)@^{|-}\restore}
\renewcommand{\L}{\mathsf{L}}
\newcommand{\p}[0]{\proj}
\newcommand{\q}[0]{\gen}
\newcommand{\C}{\cat{C}}
\newcommand{\D}{\cat{D}}
\newcommand{\co}[2]{#1{.}#2}
\newcommand{\CWF}{\C}
\newcommand{\Elt}[3][\CWF]{#1(#2\mathbin{\scriptstyle\vdash}#3)} 
\newcommand{\Fam}[2][\CWF]{#1(#2)}
\newcommand{\FamU}[3][\C]{#1(#2,{#3})}
\newcommand{\gen}{\mathsf{q}}
\newcommand{\Hom}[3][\CWF]{#1(#2,#3)}
\newcommand{\morphism}{\rightarrow}
\newcommand{\proj}{\mathsf{p}}
\newcommand{\app}[2]{{app}\left(#1,#2\right)}
\newcommand{\pair}[2]{\left(#1,#2\right)}
\newcommand{\El}[1]{\mathsf{El}\,#1}
\newcommand{\bij}{\cong}
\newcommand{\earlier}{\triangleleft}
\newcommand{\later}{\triangleright}
\newcommand{\fix}{\mathsf{fix}}
\newcommand{\nxt}{\mathsf{next}}
\newcommand{\nats}{\mathbb{N}}
\newcommand{\transp}[1]{\overline{#1}}
\newcommand{\define}[1]{\textbf{#1}}
\renewcommand{\Box}{\square}
\newcommand{\defeq}{\triangleq}
\newcommand{\den}[1]{\llbracket#1\rrbracket}
\newcommand{\eqterm}[4]{#1\vdash#2=#3:#4}
\newcommand{\eqtype}[3]{#1\vdash#2=#3}
\newcommand{\eqtypeU}[4]{#1\vdash_{#4}#2=#3}
\newcommand{\Exch}[4]{\mathsf{E}(#1;#2;#3;#4)}
\newcommand{\isctx}[1]{#1\vdash}
\newcommand{\isterm}[3]{#1\vdash#2:#3}
\newcommand{\istype}[2]{#1\vdash#2}
\newcommand{\istypeU}[3]{#1\vdash_{#3}#2}
\newcommand{\Jud}{\mathcal{J}}
\newcommand{\lock}{\mbox{\faUnlock}}
\newcommand{\open}{\ensuremath{\mathsf{open}}}
\newcommand{\Proj}[3]{\mathsf{P}(#1;#2;#3)}
\newcommand{\PiT}[3]{\Pi #1:#2.\,#3}
\newcommand{\shut}{\ensuremath{\mathsf{shut}}}
\newcommand{\Subst}[4]{\mathsf{S}(#1;#2;#3;#4)}
\newcommand{\subst}[3]{#1[#2/#3]}
\newcommand{\Terminal}{\top}
\newcommand{\Gir}{\mathscr{G}}
\newcommand{\Atom}{\mathbb{A}}
\newcommand{\Nom}{\mathbf{Nom}}
\newcommand{\clott}{\textsf{CloTT}}
\newcommand\latbind[2]{{\triangleright}\, (#1:#2) .}
\newcommand{\tickA}{\alpha}
\newcommand{\CwDRA}{CwDRA}
\newcommand{\CwDRAs}{CwDRAs}
\newcommand{\CwU}{CwU}
\newcommand{\CwUDRA}{CwUDRA}
\begin{document}

\maketitle

\begin{abstract}
  In recent years we have seen several new models of dependent type theory
  extended with some form of modal necessity operator, including nominal type
  theory, guarded and clocked type theory, and spatial
  and cohesive type theory. 
  In this paper we study \emph{modal dependent type theory}:
  dependent type theory with an operator satisfying (a dependent version of)
  the K axiom of modal logic.
  We investigate both semantics and syntax.
  For the semantics, we introduce categories with families with a
  \emph{dependent right adjoint} (\CwDRA) and show that the examples
  above can be presented as such.
  Indeed, we show that any category with finite limits and an adjunction of
  endofunctors gives rise to a \CwDRA\ via the local universe construction.
  For the syntax, we introduce a dependently typed extension of Fitch-style
  modal lambda-calculus, show that it can be interpreted in any \CwDRA, and
  build a term model.
  We extend the syntax and semantics with universes.
\end{abstract}

\section{Introduction}
\label{sec:intro}

Dependent types are a powerful technology for both programming and formal
proof. In recent years we have seen several new models of dependent type theory extended with a type former resembling modal necessity%
\footnote{For an introduction to modal logic, see e.g.~\citeasnoun{blackburn2002modal}.}%
, such as nominal
type theory~\cite{PittsAM:deptta},
guarded~\cite{birkedal2011first,BirkedalL:gdtt-conf,GDTTmodel,GCTT} and
clocked~\cite{CloTTmodel} type theory, and spatial and cohesive type
theory~\cite{shulman2018brouwer}. These examples all satisfy the K axiom of
modal logic
\begin{equation}\label{axiom_K}
\Box(A\to B)\to\Box A\to\Box B
\end{equation}
but are not all (co)monads, the more extensively studied construction in the
context of dependent type
theory~\cite{Krishnaswami:Integrating,dePaiva:Fibrational,vakar2017search,shulman2018brouwer}.
Motivated in part by these examples, in this paper we study \emph{modal dependent type theory}: dependent type theory with an operator satisfying (a dependent generalisation of) the K axiom%
\footnote{For ``Kripke''; not to be confused with Streicher's K~\cite{streicher1993investigations}.}
of modal logic. We investigate both semantics and syntax.

For the semantics, we introduce categories with families with a \emph{dependent right adjoint} (\CwDRA) and show that this dependent right
adjoint models the modality in the examples mentioned above.
Indeed, we show that any finite limit category with an adjunction of endofunctors%
\footnote{This should not be confused with models where there are
adjoint functors between different categories which can be composed to
define a monad or comonad.}
gives rise to a \CwDRA\ via the local universe
construction~\cite{LumsdainePL:locumo}.
In particular, by applying the local universe construction to a locally cartesian closed category with an adjunction of endofunctors, we get a model of modal dependent type theory with $\Pi$- and $\Sigma$-types. 

For the syntax, we adapt the simply typed Fitch-style modal lambda-calculus
introduced by~\citeasnoun{Borghuis:Coming} and~\citeasnoun{Martini:Computational}, inspired by Fitch's proof theory for modal
logic~\cite{Fitch:Symbolic}. In such a calculus $\Box$ is introduced by `shutting'
a strict subordinate proof and eliminating by `opening' one. For example the
K axiom~\eqref{axiom_K} is inhabited by the term
\begin{equation}\label{eq:K_inhabited}
\lambda f.\lambda x.\shut((\open\,f)(\open\,x))
\end{equation}
The nesting of subordinate proofs can be tracked in sequent style by a special
symbol in the context which we call a \emph{lock}, and write $\lock$; the open
lock symbol is intended to suggest we have access to the contents of a box.
Following~\citeasnoun{Clouston:fitch-2018}, the lock can be understood
as an operation on contexts \emph{left adjoint} to $\Box$; hence Fitch-style
modal $\lambda$-calculus has a model in any cartesian closed category equipped
with an adjunction of endofunctors. Here we show, in work inspired by Clocked
Type Theory~\cite{bahr2017clocks}, that Fitch-style $\lambda$-calculus lifts
with a minimum of difficulty to dependent types. In particular the term
\eqref{eq:K_inhabited}, where $f$ is a dependent function, has type
\[
  \Box(\Pi y:A.\,B) \to \Pi x:\Box A.\,\Box\subst{B}{\open\,x}{y}
\]
This dependent version of the K axiom, not obviously expressible without the
$\open$ construct of a Fitch-style calculus, allows modalised functions to be
applied to modalised data even in the dependent case. This capability is known
to be essential in at least one example, namely proofs about guarded
recursion~\cite{BirkedalL:gdtt-conf}%
\footnote{This capability was achieved by Bizjak et
al.~\cite{BirkedalL:gdtt-conf} via \emph{delayed substitutions}, but this
construction does not straightforwardly support an operational
semantics~\cite{bahr2017clocks}.}%
.
We show that our calculus can be soundly interpreted in any \CwDRA, and
construct a term model.

We also extend the syntax and semantics of modal dependent type theory with universes.
Here we restrict attention to models based on (pre)sheaves, for which Coquand has proposed a particularly simple formulation of universes~\cite{Coquand:CwU}.
We show how to extend Coquand's notion of a category with universes with dependent right adjoints, and observe that a construction encoding the modality on the universe,
introduced for guarded type theory by~\citeasnoun{BirkedalL:gdtt-conf}, in fact arises for
more general reasons.

Another motivation for the present work is that it can be understood as providing a notion of a \emph{dependent} adjunction between endofunctors.
An ordinary adjunction $\L\dashv \R$ on a category $\C$ is a natural bijective
correspondence $\Hom{\L A}{B} \bij \Hom{A}{\R B}$.
With dependent types one might consider \emph{dependent} functions from
$\L A$ to $B$, where $B$ may depend on $\L A$, and similarly from $A$ to
$\R B$. Our notion of \CwDRA\ then defines what it means to have an adjoint correspondence in this dependent case. Our Fitch-style modal dependent type
theory can therefore also be understood as a term language for dependent
adjoints.

\medskip\noindent\textbf{Outline}\quad We introduce \CwDRAs\ in Section~\ref{sec:CwDRA}, and present the syntax of modal dependent type theory in Section~\ref{sec:syntax}.
In Section~\ref{sec:contruction-of-CwDRAs} we show how to construct a \CwDRA\ from an adjunction on a category with finite limits.
In Section~\ref{sec:examples} we show how various models in the literature
can be presented as \CwDRAs.
The extension with universes is defined in Section~\ref{sec:universes}.
We end with a discussion of related and future work in Section~\ref{sec:discussion}.

\section{Categorical Semantics of Modal Dependent Type Theory}
\label{sec:CwDRA}

The notion of \emph{category with families}
(CwF)~\cite{Dybjer1996,hofmann1997syntax} provides a semantics for the
development of dependent type theory which elides some difficult aspects of
syntax, such as variable binding, as well as the coherence problems of simpler
notions of model.  It can be connected to syntax by a soundness argument and
term model construction, and to more mathematical models via `strictification'
constructions~\cite{hofmann1994interpretation,LumsdainePL:locumo}. 
In this section we extend this notion to introduce
\emph{categories with a dependent right adjoint} (CwDRA). We first recall the
standard definition:

\begin{definition}[\textbf{category with families}]
  \label{def:cwf}
  A \define{CwF} is specified by:
  \begin{enumerate}
  \item A category $\C$ with a terminal object $\Terminal$. Given objects
    $\Gamma,\Delta\in\C$, write $\C(\Delta,\Gamma)$ for the set of
    morphisms from $\Delta$ to $\Gamma$ in $\C$. The identity morphism
    on $\Gamma$ is just written $\id$ with $\Gamma$ implicit. The
    composition of $\gamma\in\C(\Delta,\Gamma)$ with
    $\delta\in\C(\Phi,\Delta)$ is written $\gamma\comp\delta$.
    
  \item For each object $\Gamma\in\C$, a set $\Fam{\Gamma}$ of
    \emph{families} over $\Gamma$.

  \item For each object $\Gamma\in\C$ and family $A\in\Fam{\Gamma}$, a
    set $\Elt{\Gamma}{A}$ of \emph{elements} of the family $A$ over
    $\Gamma$.

  \item\label{item:2} For each morphism
    $\gamma\in\Hom{\Delta}{\Gamma}$, \emph{re-indexing}
    functions $A\in\Fam{\Gamma} \mapsto A[\gamma]\in\Fam{\Delta}$ and
    $a\in\Elt{\Gamma}{A}\mapsto a[\gamma]\in\Elt{\Delta}{A[\gamma]}$,
    satisfying $A[\id]=A$, $A[\gamma\comp\delta] = A[\gamma][\delta]$,
    $a[\id]=a$ and $a[\gamma\comp\delta] = a[\gamma][\delta]$.

  \item\label{item:1} For each object $\Gamma\in\C$ and family
    $A\in\Fam{\Gamma}$, a \emph{comprehension object}
    $\co{\Gamma}{A}\in\C$ equipped with a \emph{projection morphism}
    $\proj_A\in\C(\co{\Gamma}{A},\Gamma)$, a \emph{generic element}
    $\gen_A\in\Elt{\co{\Gamma}{A}}{A[\proj_A]}$ and a \emph{pairing
      operation} mapping 
    $\gamma\in\Hom{\Delta}{\Gamma}$ and $a\in\Elt{\Delta}{A[\gamma]}$
    to $\pair{\gamma}{a}\in\C(\Delta,\co{\Gamma}{A})$ satisfying
    $\proj_A\comp\pair{\gamma}{a}= \gamma$,
    $\gen_A[\pair{\gamma}{a}] = a$,
    $\pair{\gamma}{a}\comp\delta =
    \pair{\gamma\comp\delta}{a[\delta]}$ and
    $\pair{\proj_A}{\gen_A} = \id$.
  \end{enumerate}
\end{definition}

A \emph{dependent right adjoint} then extends the definition of CwF with a
functor on contexts $\L$ and an operation on families $\R$, intuitively
understood to be left and right adjoints:

\begin{definition}[\textbf{category with a dependent right adjoint}]
  \label{def:cwf-a}
  A \emph{CwDRA} is a CwF $\C$ equipped with the following extra
  structure:

  \begin{enumerate}
  \item\label{item:3} An endofunctor $\L:\C\morphism\C$ on the
    underlying category of the CwF.
    
  \item For each object $\Gamma\in\C$ and family $A\in\Fam{\L\Gamma}$,
    a family $\R_\Gamma A \in\Fam{\Gamma}$, stable under re-indexing
    in the sense that for all $\gamma\in\C(\Delta,\Gamma)$ we have
    \begin{equation}
      \label{eq:11}
      (\R_\Gamma A)[\gamma] = \R_\Delta(A[\L\gamma]) \in \Fam{\Delta}
    \end{equation}
    
  \item For each object $\Gamma\in\C$ and family $A\in\Fam{\L\Gamma}$
    a bijection
    \begin{equation}
      \label{eq:18}
      \Elt{\L\Gamma}{A}\bij\Elt{\Gamma}{\R_\Gamma A}
    \end{equation}
    We write the effect of this bijection on $a\in\Elt{\L\Gamma}{A}$
    as $\transp{a}\in\Elt{\Gamma}{\R_\Gamma A}$ and write the effect of
    its inverse on $b\in\Elt{\Gamma}{\R_\Gamma A}$ also as
    $\transp{b}\in\Elt{\L\Gamma}{A}$. Thus
    \begin{align}
      \transp{\transp{a}} &= a &&(a\in\Elt{\L\Gamma}{A}) \label{eq:12}\\
      \transp{\transp{b}} &= b &&(b\in\Elt{\Gamma}{\R_\Gamma A})\label{eq:13}     
    \end{align}
    The bijection is required to be stable under
    re-indexing in the sense that for all $\gamma\in\C(\Delta,\Gamma)$
    we have
    \begin{equation}
      \label{eq:14}
      \transp{a}[\gamma] = \transp{a[\L\gamma]}
    \end{equation}
  \end{enumerate}
\end{definition}
Note that equation (\ref{eq:14}) is well-typed by (\ref{eq:11}). Equation (\ref{eq:14})
also implies that the opposite direction of the isomorphism (\ref{eq:18}) is natural, i.e., that
the equation 
    \begin{equation}
      \label{eq:15}
      \transp{b}[\L\gamma] = \transp{b[\gamma]}
    \end{equation}
also holds, since 
      $\transp{b[\gamma]} = \transp{\transp{\transp{b}}[\gamma]} =
      \transp{\transp{\transp{b}[\L\gamma]}} = \transp{b}[\L\gamma]$.

\section{Syntax of Modal Dependent Type Theory}
\label{sec:syntax}

In this section we extend Fitch-style modal
$\lambda$-calculus~\cite{Borghuis:Coming} to dependent types, and connect this to the
notion of CwDRA via a soundness proof and term model construction.
We define our dependent types broadly in the style of ECC~\cite{Luo:ECC}, as this is close to the implementation of some proof assistants~\cite{norell:thesis}.

We define the raw syntax of contexts, types, and terms as follows:
\begin{align*}
  \Gamma &\;\defeq\; \Emp \;\mid\; \Gamma,x:A \;\mid\; \Gamma,\lock \\
  A &\;\defeq\; \PiT{x}{A}{B} \;\mid\; \Box A \\
  t &\;\defeq\; x \;\mid\; \lambda x:A.\,t \;\mid\; t\,t \;\mid\; \shut\,t \;\mid\; \open\,t
\end{align*}
We omit the leftmost `$\Emp,$' where the context is non-empty. We will usually omit the
type annotation on the $\lambda$ for brevity. $\Pi$-types
are included in the grammar as an example to show that standard
type formers can be defined as usual, without reference to the locks
in the context. One could similarly add an empty type, unit type, booleans,
$\Sigma$-types, W-types, universes (of which more in
Section~\ref{sec:universes}), and so forth.

Judgements have forms
\begin{align*}
  \isctx{\Gamma} &\qquad\mbox{`$\Gamma$ is a well-formed context'} \\
  \istype{\Gamma}{A}    &\qquad\mbox{`$A$ is a well-formed type in context $\Gamma$'} \\
  \eqtype{\Gamma}{A}{B} &\qquad\mbox{`$A$ and $B$ are equal types in context $\Gamma$'} \\
  \isterm{\Gamma}{t}{A} &\qquad\mbox{`$t$ is a term with type $A$ in context $\Gamma$'} \\
  \eqterm{\Gamma}{t}{u}{A} &\qquad\mbox{`$t$ and $u$ are equal terms with type $A$ in context $\Gamma$'}
\end{align*}

\begin{figure}
Context formation rules:
\begin{mathpar}
  \inferrule*{ }{\isctx{\Emp}}
  \and
  \inferrule*[right=$x\notin\Gamma$]{
    \isctx{\Gamma} \\
    \istype{\Gamma}{A}}{
    \isctx{\Gamma,x:A}}
  \and
  \inferrule*{
    \isctx{\Gamma}}{
    \isctx{\Gamma,\lock}}
  \and
  \inferrule*[right=$x$ not free in $B$]{
    \isctx{\Gamma,x:A,y:B,\Gamma'}}{
    \isctx{\Gamma,y:B,x:A,\Gamma'}}
\end{mathpar}

Type formation rules:
\begin{mathpar}
  \inferrule*{
    \istype{\Gamma}{A} \\
    \istype{\Gamma,x:A}{B}}{
    \istype{\Gamma}{\PiT{x}{A}{B}}}
  \and
  \inferrule*{
    \istype{\Gamma,\lock}{A}}{
    \istype{\Gamma}{\Box A}}
\end{mathpar}

Type equality rules are as standard, asserting equivalence, and congruence with respect
to all type formers.

Term formation rules:
\begin{mathpar}
  \inferrule*{
    \isterm{\Gamma}{t}{A} \\
    \eqtype{\Gamma}{A}{B}}{
    \isterm{\Gamma}{t}{B}}
  \and
  \inferrule*[right=$\lock\notin\Gamma'$]{
    \isctx{\Gamma,x:A,\Gamma'}}{
    \isterm{\Gamma,x:A,\Gamma'}{x}{A}}
  \and
  \inferrule*{
    \isterm{\Gamma,x:A}{t}{B}}{
    \isterm{\Gamma}{\lambda x.t}{\PiT{x}{A}{B}}}
  \and
  \inferrule*{
    \isterm{\Gamma}{t}{\PiT{x}{A}{B}} \\
    \isterm{\Gamma}{u}{A}}{
    \isterm{\Gamma}{t\,u}{\subst{B}{u}{x}}} \\
  \and
  \inferrule*{
    \isterm{\Gamma,\lock}{t}{A}}{
    \isterm{\Gamma}{\shut\,t}{\Box A}}
  \and
  \inferrule*[right=$\lock\notin\Gamma'$]{
    \isterm{\Gamma}{t}{\Box A} \\
    \isctx{\Gamma,\lock,\Gamma'}}{
    \isterm{\Gamma,\lock,\Gamma'}{\open\,t}{A}}
\end{mathpar}

Term equality rules, omitting equivalence and congruence:
\begin{mathpar}
  \inferrule*{
    \isterm{\Gamma}{(\lambda x.t)u}{A}}{
    \eqterm{\Gamma}{(\lambda x.t)u}{\subst{t}{u}{x}}{A}}
  \and
  \inferrule*{
    \isterm{\Gamma}{\open\,\shut\,t}{A}}{
    \eqterm{\Gamma}{\open\,\shut\,t}{t}{A}}
  \and
  \inferrule*[right=$x\notin\Gamma$]{
    \isterm{\Gamma}{t}{\PiT{x}{A}{B}}}{
    \eqterm{\Gamma}{t}{\lambda x.t\,x}{\PiT{x}{A}{B}}}
  \and
  \inferrule*{
    \isterm{\Gamma}{t}{\Box A}}{
    \eqterm{\Gamma}{t}{\shut\,\open\,t}{\Box A}}
\end{mathpar}
\caption{Typing rules for a dependent Fitch-style modal $\lambda$-calculus.}
\label{fig:typing_rules}
\end{figure}

Figure~\ref{fig:typing_rules} presents the typing rules of the calculus.
The syntactic results below follow easily by induction on these rules. We
remark only that exchange of variables with locks, and weakening of locks, are
not admissible, and that the (lock-free) weakening $\Gamma'$ in the $\open$
rule is essential to proving variable weakening.

\begin{lemma}\label{lem:syntactic_sanity}
Let $\Jud$ range over the possible strings to the right of a turnstile in a
judgement.
\begin{enumerate}
\item If $\Gamma,x:A,y:B,\Gamma'\vdash\Jud$ and $x$ is not free in $B$, then
$\Gamma,y:B,x:A,\Gamma'\vdash\Jud$;
\item If $\Gamma,\Gamma'\vdash\Jud$, and $\Gamma\vdash A$, and $x$ is a
fresh variable, then $\Gamma,x:A,\Gamma'\vdash\Jud$;
\item If $\Gamma,x:A,\Gamma'\vdash \Jud$ and $\Gamma\vdash u:A$, then
$\Gamma,\subst{\Gamma'}{u}{x}\vdash\subst{\Jud}{u}{x}$;
\item If $\Gamma\vdash t:A$ then $\Gamma\vdash A$;
\item If $\eqterm{\Gamma}{t}{u}{A}$ then $\isterm{\Gamma}{t}{A}$ and
$\isterm{\Gamma}{u}{A}$.
\end{enumerate}
\end{lemma}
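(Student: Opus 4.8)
The plan is to prove all five statements simultaneously by a single mutual induction on the derivation of the premise judgement. This is the standard way to handle syntactic metatheory lemmas, because the statements are mutually dependent: substitution (3) needs weakening (2), weakening needs exchange (1), and the type-of-a-term and equality-sanity statements (4)–(5) thread through every case. Each statement quantifies over a judgement $\Jud$ (for (1)–(3)) or specialises to term and equality judgements (for (4)–(5)), so the induction runs over every typing rule in Figure~\ref{fig:typing_rules}, checking that the operation in question (permuting the two hypotheses, inserting a fresh hypothesis, substituting a term for a variable) propagates through the rule.

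\begin{proof}
We argue by simultaneous induction on the derivation of the judgement assumed in each clause, proceeding by cases on the last rule applied. For each structural rule the required conclusion follows by applying the induction hypotheses to the premises and then re-applying the same rule; the only subtlety is keeping track of how the context manipulation interacts with the side conditions and with the locks $\lock$.

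For \textbf{exchange} (1) the critical observation is that the premises of every rule refer to the context only up to the position of the variables being swapped, so reapplying the rule after exchanging $x{:}A$ and $y{:}B$ is immediate, using the hypothesis that $x$ is not free in $B$ to keep $\istype{\Gamma,y{:}B}{\ldots}$ well-formed; the context-formation exchange rule itself supplies the base case. For \textbf{weakening} (2) the delicate cases are the variable rule and the $\open$ rule, whose side conditions mention $\lock\notin\Gamma'$: inserting a fresh \emph{variable} (never a lock) into $\Gamma'$ preserves these side conditions, which is precisely why the lock-free weakening $\Gamma'$ in the $\open$ rule was built into the calculus. For \textbf{substitution} (3) one checks that $\subst{(-)}{u}{x}$ commutes with each type- and term-former, the only genuine work being the $\open$ and $\shut$ cases, where one confirms that substitution does not cross a lock in a way that violates the $\lock\notin\Gamma'$ side condition; statements (4) and (5) are then routine, reading off the type of a term from its formation rule and extracting the two typing premises implicit in each equality rule.

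The main obstacle will be the interaction between substitution (3) and the locks, specifically in the $\open$ case: when we substitute $u$ for $x$ where the variable $x$ lies to the \emph{left} of a lock relative to the occurrence of $\open$, we must ensure the substituted derivation still satisfies $\lock\notin\Gamma'$ and that the term $u$ being substituted is itself typed in the correct (lock-reduced) prefix of the context. Since exchange and weakening across locks are deliberately \emph{not} admissible, as remarked before the lemma, one cannot freely normalise the context, so each clause must be stated and proved for general $\Gamma,\Gamma'$ with the lock structure left intact, and the inductive hypotheses applied to exactly the sub-context appearing in each premise.
\end{proof}
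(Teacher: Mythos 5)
Your proposal matches the paper's approach: the paper offers no detailed proof, stating only that these results "follow easily by induction on these rules" and remarking on precisely the subtleties you identify, namely that exchange and weakening across locks are not admissible and that the lock-free weakening $\Gamma'$ built into the $\open$ rule is what makes variable weakening go through. Your elaboration of the $\open$ and variable cases, and of how substitution threads past a lock, is a correct filling-in of that sketch.
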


\subsection{Sound interpretation in \CwDRAs}\label{sec:interpret}

In this section we show that the calculus of Figure~\ref{fig:typing_rules} can be soundly
interpreted in any CwDRA. We wish to give meaning to contexts, types, and terms, but
(via the type conversion rule) these can have multiple derivations, so it is not possible to
work by induction on the formation rules. Instead, following
e.g.~\citeasnoun{hofmann1997syntax}, we define a partial map from \emph{raw
syntax} to semantics by induction on the grammar, then prove this map is
defined for well-formed syntax. By `raw syntax' we mean contexts, types accompanied by a context, and terms accompanied by context
and type, defined via the grammar. The \emph{size} of a type or term is the
number of connectives and variables used to define it, and the size of a context is the sum of
the sizes of its types.

Well-defined contexts $\Gamma$ will be interpreted as objects $\den{\Gamma}$ in $\CWF$,
types in context $\istype{\Gamma}{A}$ as families in $\Fam{\den{\Gamma}}$, and typed
terms in context $\isterm{\Gamma}{t}{A}$ as elements in
$\Elt{\den{\Gamma}}{\den{\istype{\Gamma}{A}}}$. Where there is no
confusion we write $\den{\istype{\Gamma}{A}}$ as $\den{A}$ and
$\den{\isterm{\Gamma}{t}{A}}$ as $\den{\Gamma\vdash t}$ or
$\den{t}$.

The partial interpretation of raw syntax is as follows, following the convention
that ill-formed expressions (for example, where a subexpression is undefined)
are undefined. We omit the details for $\Pi$-types and other standard
constructions, which are as usual.
\begin{itemize}
\item $\den{\Emp}=\Terminal$;
\item $\den{\Gamma,x:A}=\co{\den{\Gamma}}{\den{A}}$;
\item $\den{\Gamma,\lock}=\L\den{\Gamma}$;
\item $\den{\Gamma\vdash\Box A}=R_{\den{\Gamma}}(\den{A})$;
\item $\den{\isterm{\Gamma,x:A,x_1:A_1,\ldots,x_n:A_n}{x}{A}}=
\gen_{\den{A}}[\proj_{\den{A_1}}\circ\cdots\circ\proj_{\den{A_n}}]$;
\item $\den{\isterm{\Gamma}{\shut\,t}{\Box A}}=\transp{\den{t}}$;
\item $\den{\isterm{\Gamma,\lock,x_1:A_1,\ldots,x_n:A_n}{\open\,t}{A}}=
\transp{\den{t}}[\proj_{\den{A_1}}\circ\cdots\circ\proj_{\den{A_n}}]$.
\end{itemize}

In Figure~\ref{fig:struct_morphisms} we define expressions $\Proj{\Gamma}{A}{\Gamma'}$,
$\Exch{\Gamma}{A}{B}{\Gamma'}$, and $\Subst{\Gamma}{A}{\Gamma'}{t}$ that, where
defined, define morphisms in $\C$ corresponding respectively to weakening, exchange, and
substitution in contexts.

\begin{figure}
  \begin{tabular}{l c l}
    $\Proj{\Gamma}{A}{\Emp}$ &$=$& $\proj_{\den{A}}$ \\
    $\Proj{\Gamma}{A}{\Gamma',y:B}$ &$=$&
      $\pair{\Proj{\Gamma}{A}{\Gamma'}\circ
      \proj_{\den{\istype{\Gamma,x:A,\Gamma'}{B}}}}
      {\gen_{\den{\istype{\Gamma,x:A,\Gamma'}{B}}}}$ \\
    $\Proj{\Gamma}{A}{\Gamma',\lock}$ &$=$&
      $\L\,\Proj{\Gamma}{A}{\Gamma'}$ \vspace {0.25em} \\
    $\Exch{\Gamma}{A}{B}{\Emp}$ &$=$&
      $((\proj_{\den{\istype{\Gamma}{B}}}\circ
      \proj_{\den{\istype{\Gamma,y:B}{A}}},
      \gen_{\den{\istype{\Gamma,y:B}{A}}}),\gen_{\den{\istype{\Gamma}{B}}}
      [\proj_{\den{\istype{\Gamma,y:B}{A}}}])$ \\
    $\Exch{\Gamma}{A}{B}{\Gamma',z:C}$ &$=$&
      $(\Exch{\Gamma}{A}{B}{\Gamma'}\circ
      \proj_{\den{\istype{\Gamma,y:B,x:A,\Gamma'}{C}}},
      \gen_{\den{\istype{\Gamma,y:B,x:A,\Gamma'}{C}}})$ \\
    $\Exch{\Gamma}{A}{B}{\Gamma',\lock}$ &$=$&
      $\L\,\Exch{\Gamma}{A}{B}{\Gamma'}$ \vspace {0.25em} \\
    $\Subst{\Gamma}{A}{\Emp}{t}$ &$=$& $(\id,\den{t})$ \\
    $\Subst{\Gamma}{A}{\Gamma',y:B}{t}$ &$=$&
      $(\Subst{\Gamma}{A}{\Gamma'}{t}\circ
      \proj_{\den{\istype{\Gamma,\subst{\Gamma'}{t}{x}}{\subst{B}{t}{x}}}},
      \gen_{\den{\istype{\Gamma,\subst{\Gamma'}{t}{x}}{\subst{B}{t}{x}}}})$ \\
    $\Subst{\Gamma}{A}{\Gamma',\lock}{t}$ &$=$&
      $\L\,\Subst{\Gamma}{A}{\Gamma'}{t}$
  \end{tabular}
  \caption{Morphisms in $\C$ corresponding to weakening, exchange, and substitution.}
\label{fig:struct_morphisms}
\end{figure}

\begin{lemma}\label{lem:Proj}
Suppose $\den{\Gamma,\Gamma'}$ and $\den{\Gamma,x:A,\Gamma'}$ are
defined. Then the following properties hold:
\begin{enumerate}
\item $\den{\Gamma,x:A,\Gamma'\vdash X}\simeq
\den{\Gamma,\Gamma'\vdash X}[\Proj{\Gamma}{A}{\Gamma'}]$, where
$\simeq$ is Kleene equality, and $X$ is a type or typed term;
\item $\Proj{\Gamma}{A}{\Gamma'}$ is a well-defined morphism from
$\den{\Gamma,x:A,\Gamma'}$ to $\den{\Gamma,\Gamma'}$;
\end{enumerate}
\end{lemma}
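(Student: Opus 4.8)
The two properties are mutually dependent—the statement that $\Proj{\Gamma}{A}{\Gamma'}$ is a well-defined morphism of the right type presupposes that the reindexing in property~(1) makes sense—so I'll prove them together by a single induction on the structure of the suffix context $\Gamma'$, matching the three clauses defining $\Proj{\Gamma}{A}{\Gamma'}$ in Figure~\ref{fig:struct_morphisms}. Throughout I work under the standing hypotheses that $\den{\Gamma,\Gamma'}$ and $\den{\Gamma,x:A,\Gamma'}$ are defined.

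The plan is as follows. In the base case $\Gamma' = \Emp$, we have $\Proj{\Gamma}{A}{\Emp} = \proj_{\den{A}} \in \C(\co{\den{\Gamma}}{\den{A}}, \den{\Gamma})$, which is a morphism from $\den{\Gamma,x:A}$ to $\den{\Gamma}$ by the comprehension clause of Definition~\ref{def:cwf}; property~(1) for types and terms is then exactly the statement that weakening by one variable is reindexing along the projection, which follows from the interpretation clauses together with the CwF equations $A[\gamma\comp\delta]=A[\gamma][\delta]$ and $\gen_A[\pair{\gamma}{a}]=a$. For the inductive step with $\Gamma' = \Gamma'',y:B$, the definition extends $\Proj{\Gamma}{A}{\Gamma''}$ by a pairing operation; here I invoke the induction hypothesis (both parts) to know $\Proj{\Gamma}{A}{\Gamma''}$ is a morphism of the correct source and target and that $\den{\istype{\Gamma,x:A,\Gamma''}{B}} \simeq \den{\istype{\Gamma,\Gamma''}{B}}[\Proj{\Gamma}{A}{\Gamma''}]$, and then check that the pairing typechecks and that property~(1) propagates through the comprehension using the pairing laws, especially $\proj_A\comp\pair{\gamma}{a}=\gamma$ and $\pair{\gamma}{a}\comp\delta=\pair{\gamma\comp\delta}{a[\delta]}$.

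The genuinely modal case is $\Gamma' = \Gamma'',\lock$, where $\Proj{\Gamma}{A}{\Gamma'',\lock} = \L\,\Proj{\Gamma}{A}{\Gamma''}$. Functoriality of $\L$ gives that applying $\L$ to the morphism supplied by the induction hypothesis yields a morphism $\L\den{\Gamma,x:A,\Gamma''} \morphism \L\den{\Gamma,\Gamma''}$, which by the clause $\den{\Gamma,\lock}=\L\den{\Gamma}$ is exactly a morphism $\den{\Gamma,x:A,\Gamma'',\lock} \morphism \den{\Gamma,\Gamma'',\lock}$, settling property~(2). For property~(1) across the lock, the crucial point is that a type or term whose interpretation sits under a lock is built using $\R$ and the transpose $\transp{(-)}$, and the stability equations~\eqref{eq:11} and~\eqref{eq:14} are precisely what is needed: reindexing $\R_{\den{\Gamma,\Gamma''}}$ along $\Proj{\Gamma}{A}{\Gamma''}$ coincides with $\R_{\den{\Gamma,x:A,\Gamma''}}$ applied to the reindexing along $\L\,\Proj{\Gamma}{A}{\Gamma''}=\Proj{\Gamma}{A}{\Gamma'',\lock}$, and similarly for transposed elements via~\eqref{eq:14}.

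I expect the main obstacle to be bookkeeping the Kleene equality $\simeq$ correctly rather than any single hard calculation: because the interpretation is only a \emph{partial} map on raw syntax, each step must establish definedness of one side from definedness of the other and only then assert equality, and this must be threaded uniformly through types, terms, and in particular the variable-lookup and $\open$ clauses where a composite of projections is itself reindexed. The subtlety specific to this calculus is that the clause for $\open$ interacts with the lock case, so I must confirm that weakening commutes with the transpose on the nose—again reducing to~\eqref{eq:14}—and that the lock-free restriction on weakening (noted after Lemma~\ref{lem:syntactic_sanity}) never forces us to weaken across a lock in a way the semantics cannot absorb. Once the stability equations are in hand, none of the individual verifications is deep; the work is in organizing the simultaneous induction so that each clause of Figure~\ref{fig:struct_morphisms} is discharged against the matching interpretation clause.
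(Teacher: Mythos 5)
Your semantic ingredients are the right ones --- \eqref{eq:11} and \eqref{eq:14} for $\Box$ and $\shut$, functoriality of $\L$ for statement 2 in the lock case, the pairing laws for the variable case --- but the induction you propose does not go through as organised. The difficulty is that the interpretation clauses for $\Box$ and $\open$ change the length of the context suffix: $\den{\Gamma,x:A,\Gamma'\vdash\Box B}$ is defined as $\R_{\den{\Gamma,x:A,\Gamma'}}\den{\Gamma,x:A,\Gamma',\lock\vdash B}$, so establishing statement 1 for $X=\Box B$ at suffix $\Gamma'$ requires statement 1 for the smaller type $B$ at the strictly \emph{longer} suffix $\Gamma',\lock$. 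A structural induction on $\Gamma'$, even with a nested inner induction on $X$ (i.e.\ a lexicographic order with $\Gamma'$ outermost), has no induction hypothesis available at that point; the problem already appears in your base case $\Gamma'=\Emp$ as soon as $X$ is a $\Box$-type. The paper resolves this with a mutual induction on the size of $\Gamma'$ for statement 2 and on the size of $\Gamma'$ \emph{plus} the size of $X$ for statement 1, exploiting the convention that locks contribute nothing to the size of a context, so that the step from $\Box B$ over $\Gamma'$ to $B$ over $\Gamma',\lock$ strictly decreases the combined measure.

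A related confusion: you map the use of \eqref{eq:11} and \eqref{eq:14} onto the case $\Gamma'=\Gamma'',\lock$ of the definition of $\Proj{\Gamma}{A}{\Gamma'}$, but that case is only about statement 2 (where it is indeed just an application of the functor $\L$). The equations \eqref{eq:11}, \eqref{eq:14} and \eqref{eq:15} are instead used in the cases $X=\Box B$, $X=\shut\,t$ and $X=\open\,t$ of statement 1, which arise for arbitrary $\Gamma'$. Finally, for $\open$ the paper splits on whether the weakened variable lies to the right of the lock (handled purely by the pairing laws of Definition~\ref{def:cwf} part 5) or to the left (where \eqref{eq:15} and the induction hypothesis for $t$ at a shorter suffix are needed); your sketch does not register this case distinction, though it is where the only genuinely modal computation of the proof takes place.
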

\begin{proof}
The proof proceeds by mutual induction on the size of $\Gamma'$ (for
statement 2) and the size of $\Gamma'$ plus the size of $X$ (for statement 1).
We present only the cases particular to $\Box$.

We start with statement 1. We use the mutual induction with statement
2 at the smaller size of $\Gamma'$ alone to ensure that
$\Proj{\Gamma}{A}{\Gamma'}$ is well-formed with the correct domain and
codomain, then proceed by induction on the construction of $X$.

The $\Box$ case follows because
\begin{align*}
  \den{\Gamma,x:A,\Gamma'\vdash\Box B} &\simeq
    \R_{\den{\Gamma,x:A,\Gamma'}}\den{\Gamma,x:A,\Gamma',\lock\vdash B} \\
  &\simeq \R_{\den{\Gamma,x:A,\Gamma'}}(\den{\Gamma,\Gamma',\lock\vdash B}
[\Proj{\Gamma}{A}{\Gamma',\lock}]) \tag{induction} \\
  &\simeq R_{\den{\Gamma,x:A,\Gamma'}}(\den{\Gamma,\Gamma',\lock\vdash B}
[\L\Proj{\Gamma}{A}{\Gamma'}]) \\
  &\simeq (\R_{\den{\Gamma,\Gamma'}}\den{\Gamma,\Gamma',\lock\vdash B})
[\Proj{\Gamma}{A}{\Gamma'}] \tag{\ref{eq:11}} \\
  &= \den{\Gamma,\Gamma'\vdash\Box B}[\Proj{\Gamma}{A}{\Gamma'}]
\end{align*}

The $\shut$ case follows immediately from \eqref{eq:14} and induction.
For $\open$, the case where the deleted variable $x$ is to the right of the lock
follows by Definition~\ref{def:cwf} part 5. Suppose instead it is to the left.
Then
\begin{align*}
  &\den{\Gamma,\Gamma',\lock,y_1:B_1,\ldots,y_n:B_n\vdash\open\,t}
[\Proj{\Gamma}{A}{\Gamma',\lock,y_1:B_1,\ldots,y_n:B_n}] \\
  &\simeq \transp{\den{t}}[\proj_{\den{B_1}}\circ\cdots\circ\proj_{\den{B_n}}\circ
\Proj{\Gamma}{A}{\Gamma',\lock,y_1:B_1,\ldots,y_n:B_n}] \\
  &\simeq \transp{\den{t}}[\Proj{\Gamma}{A}{\Gamma',\lock}\circ
\proj_{\den{B_1}}\circ\cdots\circ\proj_{\den{B_n}}] \tag{Definition~\ref{def:cwf} part 5} \\
  &\simeq \transp{\den{t}}[\L\Proj{\Gamma}{A}{\Gamma'}]
[\proj_{\den{B_1}}\circ\cdots\circ\proj_{\den{B_n}}] \\
  &\simeq \transp{\den{t}[\Proj{\Gamma}{A}{\Gamma'}]}
[\proj_{\den{B_1}}\circ\cdots\circ\proj_{\den{B_n}}] \tag{\ref{eq:15}} \\
  &\simeq \transp{\den{\Gamma,x:A,\Gamma'\vdash t}}
[\proj_{\den{B_1}}\circ\cdots\circ\proj_{\den{B_n}}] \tag{induction} \\
  &\simeq {\den{\Gamma,x:A,\Gamma',\lock,y_1:B_1,\ldots,y_n:B_n\vdash \open\,t}}
\end{align*}

For statement 2, the lock case holds immediately by application of the functor
$\L$. Other cases follow as standard; for example the base case holds
because $\proj_{\den{A}}$ is indeed a morphism. 
\end{proof}

\begin{lemma}\label{lem:exchange}
Suppose $\den{\Gamma,x:A,y:B,\Gamma'}$ and $\den{\Gamma\vdash B}$ are
defined. Then the following properties hold:
\begin{enumerate}
\item $\den{\istype{\Gamma,y:B,x:A,\Gamma'}{X}}\simeq
\den{\istype{\Gamma,x:A,y:B,\Gamma'}{X}}
[\Exch{\Gamma}{A}{B}{\Gamma'}]$, where $X$ is a type or typed term;
\item $\Exch{\Gamma}{A}{B}{\Gamma'}$ is a well-defined morphism from
$\den{\Gamma,y:B,x:A,\Gamma'}$ to $\den{\Gamma,x:A,y:B,\Gamma'}$;
\end{enumerate}
\end{lemma}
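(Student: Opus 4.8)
The plan is to follow the proof of Lemma~\ref{lem:Proj} almost line for line, since the exchange morphism $\Exch{\Gamma}{A}{B}{\Gamma'}$ is built by the same recursion on $\Gamma'$ as $\Proj{\Gamma}{A}{\Gamma'}$, with the lock case again given by applying the functor $\L$. I would use the same mutual induction: on the size of $\Gamma'$ for statement~2, and on the size of $\Gamma'$ plus the size of $X$ for statement~1, invoking statement~2 at the strictly smaller measure to guarantee that $\Exch{\Gamma}{A}{B}{\Gamma'}$ is a well-formed morphism with the expected domain and codomain before reasoning about re-indexing. As in the earlier proof, the cases for variables, $\Pi$-types, $\lambda$-abstraction and application are the standard CwF exchange calculations and depend in no way on the locks, so I would present only the three cases specific to $\Box$.

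For statement~1, the $\Box C$ case is the exact analogue of the $\Box$ case in Lemma~\ref{lem:Proj}: unfolding $\den{\Gamma,y:B,x:A,\Gamma'\vdash\Box C}$ as $\R_{\den{\Gamma,y:B,x:A,\Gamma'}}\den{\Gamma,y:B,x:A,\Gamma',\lock\vdash C}$, applying the induction hypothesis for $C$ over the context extended by a lock, rewriting $\Exch{\Gamma}{A}{B}{\Gamma',\lock}$ as $\L\,\Exch{\Gamma}{A}{B}{\Gamma'}$, and finally using stability of $\R$ under re-indexing~\eqref{eq:11} to pull the re-indexing outside. The $\shut$ case follows immediately from naturality of the transpose~\eqref{eq:14} together with the induction hypothesis. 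For $\open$ I would split on the position of the governing lock relative to the exchanged pair. When the rightmost lock lies inside $\Gamma'$, the variables $x,y$ sit to its left, in the subcontext where the premise $t$ is typed, so the argument runs exactly as the corresponding subcase of Lemma~\ref{lem:Proj}, using the induction hypothesis on $t$ and the dual naturality equation~\eqref{eq:15}. When instead the lock lies in $\Gamma$, the pair $x,y$ lies among the lock-free variables of the suffix and does not touch $t$ at all; here exchanging the two projections is a pure CwF computation justified by Definition~\ref{def:cwf} part~5.

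For statement~2, the lock case $\Exch{\Gamma}{A}{B}{\Gamma',\lock}=\L\,\Exch{\Gamma}{A}{B}{\Gamma'}$ holds at once by functoriality of $\L$, and the cons case $\Gamma',z:C$ is typed by invoking statement~1 at the smaller measure to see that the pairing of $\Exch{\Gamma}{A}{B}{\Gamma'}\circ\proj$ with the corresponding generic element has the right codomain. The genuinely new content, and what I expect to be the main obstacle, is the base case $\Exch{\Gamma}{A}{B}{\Emp}$: I must check that the explicit pairing in Figure~\ref{fig:struct_morphisms} really is a morphism from $\den{\Gamma,y:B,x:A}$ to $\den{\Gamma,x:A,y:B}$. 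This is where the hypothesis that $\den{\istype{\Gamma}{B}}$ is defined — equivalently, the side condition that $x$ is not free in $B$ — is essential: it forces $\den{\istype{\Gamma,x:A}{B}}$ to coincide with $\den{\istype{\Gamma}{B}}[\proj_{\den{A}}]$, itself an instance of Lemma~\ref{lem:Proj}, so that the component $\gen_{\den{\istype{\Gamma}{B}}}[\proj_{\den{\istype{\Gamma,y:B}{A}}}]$ lands in exactly the family needed to extend the inner morphism into $\den{\Gamma,x:A}$. Verifying this typing, and checking that the two generic elements are re-indexed so the CwF equations match up, is the only place the argument departs from a mechanical transcription of Lemma~\ref{lem:Proj}.
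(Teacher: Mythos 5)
Your proposal is correct and matches the paper's proof, which is exactly ``follow Lemma~\ref{lem:Proj} verbatim, except that the base case needs Lemma~\ref{lem:Proj} itself to identify $\den{\istype{\Gamma,x:A}{B}}$ with $\den{\istype{\Gamma}{B}}[\proj_{\den{A}}]$ (and likewise $\den{\istype{\Gamma,y:B}{A}}$ with $\den{\istype{\Gamma}{A}}[\proj_{\den{B}}]$)''. The paper attributes that invocation to the base case of statement~1 where you place it in the base case of statement~2, but since the two statements are proved by mutual induction and the needed fact is the same semantic weakening equation, this is only a difference of bookkeeping, not of substance.
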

\begin{proof}
The base case of statement 1 uses Lemma~\ref{lem:Proj}; the proof otherwise
follows just as with Lemma~\ref{lem:Proj}.
\end{proof}

\begin{lemma}\label{lem:Subst}
Suppose $\den{\Gamma\vdash t:A}$ and
$\den{\Gamma,x:A,\Gamma'}$ are defined. Then the following properties hold:
\begin{enumerate}
\item $\den{\Gamma,\subst{\Gamma'}{t}{x}\vdash\subst{X}{t}{x}}\simeq
\den{\Gamma,x:A,\Gamma'\vdash X}[\Subst{\Gamma}{A}{\Gamma'}{t}]$,
where $X$ is a type or typed term;
\item $\Subst{\Gamma}{A}{\Gamma'}{t}$ is a well-defined morphism from
$\den{\Gamma,\subst{\Gamma'}{t}{x}}$ to $\den{\Gamma,x:A,\Gamma'}$;
\end{enumerate}
\end{lemma}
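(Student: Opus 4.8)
The plan is to prove both statements by a single mutual induction, following verbatim the template already established for Lemma~\ref{lem:Proj}, but with the weakening morphism $\Proj{\Gamma}{A}{\Gamma'}$ everywhere replaced by the substitution morphism $\Subst{\Gamma}{A}{\Gamma'}{t}$ of Figure~\ref{fig:struct_morphisms}. Concretely, I would induct on the size of $\Gamma'$ for statement~2, and on the size of $\Gamma'$ plus the size of $X$ for statement~1, invoking statement~2 at the strictly smaller measure (the size of $\Gamma'$ alone) whenever I need to know that $\Subst{\Gamma}{A}{\Gamma'}{t}$ is a well-defined morphism with the expected domain $\den{\Gamma,\subst{\Gamma'}{t}{x}}$ and codomain $\den{\Gamma,x:A,\Gamma'}$. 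Note that locks contribute nothing to the size of a context, so the $\Box$ clause, which increases $\Gamma'$ by one lock while decreasing $X$, still strictly decreases the measure.

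For statement~2, the base case is the pair $\Subst{\Gamma}{A}{\Emp}{t}=\pair{\id}{\den{t}}$, which is a morphism $\den{\Gamma}\to\co{\den{\Gamma}}{\den{A}}$ by the pairing operation of Definition~\ref{def:cwf} part~5, using that $\den{\Gamma\vdash t:A}$ is assumed defined so that $\den{t}\in\Elt{\den{\Gamma}}{\den{A}}$. The variable-extension case is again an application of the pairing operation, and the lock case $\Subst{\Gamma}{A}{\Gamma',\lock}{t}=\L\,\Subst{\Gamma}{A}{\Gamma'}{t}$ holds immediately by functoriality of $\L$, exactly as for $\Proj$.

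For statement~1, the three modal cases run line for line as in the proof of Lemma~\ref{lem:Proj}. The $\Box$ case uses the clause $\Subst{\Gamma}{A}{\Gamma',\lock}{t}=\L\,\Subst{\Gamma}{A}{\Gamma'}{t}$ together with stability of $\R$ under re-indexing~\eqref{eq:11}; the $\shut$ case uses~\eqref{eq:14} and induction; and the $\open$ case splits on the position of the substituted variable $x$ relative to the last lock. When $x$ sits to the right of that lock, the term being opened is typed before the lock and so does not mention $x$, so the substitution merely rearranges projections and the case closes by Definition~\ref{def:cwf} part~5; when $x$ sits to the left, the substitution is locked up as $\L\,\Subst{\Gamma}{A}{\Gamma'_1}{t}$ and the bar is pushed past the re-indexing using~\eqref{eq:15} and induction, precisely as before. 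The genuinely new work is in the base case $\Gamma'=\Emp$: for the variable $X=x$ one has $\subst{x}{t}{x}=t$ and must check $\den{\Gamma\vdash t}\simeq\gen_{\den{A}}[\pair{\id}{\den{t}}]$, which is the comprehension law $\gen_A[\pair{\gamma}{a}]=a$; for any other variable $X=y$ one reduces via Lemma~\ref{lem:Proj} (the $\Gamma'=\Emp$ instance, giving $\den{\Gamma,x:A\vdash y}\simeq\den{\Gamma\vdash y}[\proj_{\den{A}}]$) and then cancels using $\proj_{\den{A}}\comp\pair{\id}{\den{t}}=\id$.

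The hard part will not be any new categorical ingredient, since all of the modal content was already isolated in Lemma~\ref{lem:Proj}; rather it is the bookkeeping in the $\open$ case, where one must confirm that syntactic substitution commutes with $\open$ and that its semantic counterpart is correctly transported under the lock as an application of $\L$, so that~\eqref{eq:15} applies verbatim. I therefore expect the proof to be essentially a transcription of the proof of Lemma~\ref{lem:Proj}, with the base case of statement~1, and its appeal to Lemma~\ref{lem:Proj} and the comprehension laws, being the only part needing independent attention.
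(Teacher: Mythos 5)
Your proposal is correct and follows essentially the same route as the paper, whose entire proof of this lemma is the single line ``As with Lemma~\ref{lem:Proj}.'' Your elaboration of the base case of statement~1 (the comprehension law $\gen_A[\pair{\id}{\den{t}}]=\den{t}$ for the substituted variable, and cancellation via $\proj_{\den{A}}\comp\pair{\id}{\den{t}}=\id$ for the others) and of the two subcases of $\open$ is exactly the intended transcription of that earlier argument.
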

\begin{proof}
As with Lemma~\ref{lem:Proj}.
\end{proof}

\begin{theorem}[Soundness]
Where a context, type, or term is well-formed, its denotation is well-defined,
and all types and terms identified by equations have the same denotation. 
\end{theorem}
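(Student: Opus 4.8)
The plan is to establish both halves of the statement by a single simultaneous induction over the derivations of the well-formedness and equality judgements. Because the interpretation is already defined as a \emph{partial function of raw syntax}, independent of any derivation, there is no coherence obligation: it suffices to prove that (i) whenever a context, type, or term is well-formed its denotation is defined, with $\den{\istype{\Gamma}{A}} \in \Fam{\den{\Gamma}}$ and $\den{\isterm{\Gamma}{t}{A}} \in \Elt{\den{\Gamma}}{\den{A}}$; and (ii) whenever two types or two terms are judgementally equal their denotations coincide. The two halves must be proved together, since the type-conversion rule forces (i) to rely on (ii): to interpret a conversion $\isterm{\Gamma}{t}{B}$ obtained from $\isterm{\Gamma}{t}{A}$ and $\eqtype{\Gamma}{A}{B}$, one uses $\den{A} = \den{B}$ so that the element $\den{t}$ already produced for $A$ also lies in $\Elt{\den{\Gamma}}{\den{B}}$.

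For well-definedness (claim (i)) most cases are routine verifications that the required CwF or CwDRA structure exists, and I would display only those that use new material. The variable rule is discharged by the weakening Lemma~\ref{lem:Proj}: its interpretation re-indexes the generic element $\gen$ along a composite of projections, and the lemma shows both that this composite is a well-defined morphism and that the family it targets is the expected weakening of $\den{A}$. The application rule appeals to the substitution Lemma~\ref{lem:Subst}, which guarantees $\den{\istype{\Gamma}{B[u/x]}}$ is defined and equal to $\den{B}[\Subst{\Gamma}{A}{\Emp}{u}]$, placing $\den{t\,u}$ in the correct family. The genuinely modal cases are $\shut$ and $\open$: since $\den{\Gamma,\lock} = \L\den{\Gamma}$, the bijection~\eqref{eq:18} carries $\den{t} \in \Elt{\L\den{\Gamma}}{\den{A}}$ to $\transp{\den{t}} \in \Elt{\den{\Gamma}}{\R_{\den{\Gamma}}\den{A}} = \Elt{\den{\Gamma}}{\den{\Box A}}$, handling $\shut$; and $\open$ re-indexes the inverse transpose along a lock-free composite of projections, again well-defined by Lemma~\ref{lem:Proj}.

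For soundness of the equations (claim (ii)) the type-equality rules follow from functoriality of re-indexing together with the induction hypotheses, and among the term equalities the function $\beta$-rule follows from the substitution Lemma~\ref{lem:Subst} together with the standard CwF equations, giving $\den{(\lambda x.t)\,u} = \den{t[u/x]}$, while the function $\eta$-rule is the standard CwF calculation. The two rules specific to the modality are clean consequences of the bijection being an involution: for $\open\,\shut\,t = t$ we compute $\den{\open\,\shut\,t} = \transp{\transp{\den{t}}}$, which equals $\den{t}$ by~\eqref{eq:12}; and for $t = \shut\,\open\,t$ we compute $\den{\shut\,\open\,t} = \transp{\transp{\den{t}}}$, which equals $\den{t}$ by~\eqref{eq:13}. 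When the $\open$ is separated from its lock by further variables, Lemma~\ref{lem:Proj} is used to absorb the intervening lock-free weakening on both sides.

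The main difficulty is organisational rather than computational: arranging the simultaneous induction so that at each step the hypotheses it needs are already available --- soundness of type equality for the conversion rule, and the weakening and substitution lemmas for the variable and application rules --- and checking that the re-indexing-stability equations~\eqref{eq:11}, \eqref{eq:14}, and~\eqref{eq:15} are invoked correctly wherever a lock meets a weakening or substitution, this last point having in fact already been settled inside Lemma~\ref{lem:Proj} and Lemma~\ref{lem:Subst}. With those three structural lemmas in hand, the cases peculiar to $\Box$ reduce to single uses of the bijection~\eqref{eq:18} and its involution property~\eqref{eq:12}--\eqref{eq:13}, so no residual obstacle specific to the modality remains.
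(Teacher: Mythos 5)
Your proposal is correct and follows essentially the same route as the paper: the partial interpretation on raw syntax removes coherence concerns, the structural Lemmas~\ref{lem:Proj} and~\ref{lem:Subst} (the paper also invokes Lemma~\ref{lem:exchange} for the exchange rule) discharge the standard cases, and the two modal equations reduce to the involution property of the transpose, with Lemma~\ref{lem:Proj} absorbing the lock-free weakening in the $\open\,\shut$ case exactly as in the paper's displayed calculation.
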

\begin{proof}
Most cases follow as usual, using Lemmas~\ref{lem:Proj},
\ref{lem:exchange}, and~\ref{lem:Subst} as needed. The well-definedness of the
formation rules for $\Box$ are straightforward, so we present only the equations
for $\Box$:

Starting with $\isterm{\Gamma,\lock}{t}{A}$ we have $\isterm{\Gamma,
\lock,x_1:A_1,\ldots,x_n:A_n}{\open\,\shut\,t}{A}$ and wish to prove its
denotation is equal to that of $t$ (with the weakening $x_1,\ldots,x_n$). Then
$\den{\open\,\shut\,t}=\transp{\transp{\den{t}}}[\proj_{\den{A_1}}\circ
\cdots\circ\proj_{\den{A_n}}]=\den{t}[\proj_{\den{A_1}}\circ\cdots
\circ\proj_{\den{A_n}}]$, which is the weakening of $t$ by
Lemma~\ref{lem:Proj}.

The equality of $\den{\shut\,\open\,t}$ and $\den{t}$ is straightforward.
\end{proof}

\subsection{Term model}\label{sec:term_model}

We now develop as our first example of a CwDRA, a term model built from the
syntax of our calculus. The objects of this category are contexts modulo
equality, which is defined pointwise via type equality. We define an arrow
$\Delta\to\Gamma$ as a sequence of substitutions of an equivalence class of
terms for each variable in $\Gamma$:
\begin{itemize}
\item the empty sequence is an arrow $\Delta\to\cdot$;
\item Given $f:\Delta\to\Gamma$, type $\istype{\Gamma}{A}$ and term
$\isterm{\Delta}{t}{A\,f}$, where $A\,f$ is the result of applying the substitutions $f$ to
$A$, then $[t/x]\circ f$ modulo equality on $t$ is an
arrow $\Delta\to\Gamma,x:A$;
\item Given $f:\Delta\to\Gamma$ and a well-formed context $\Delta,\lock,
\Delta'$ with no locks in $\Delta'$, then $f$ is also an arrow
$\Delta,\lock,\Delta'\to\Gamma,\lock$;
\end{itemize}
We usually refer to the equivalence classes in arrows via representatives. Note
that substitution respects these equivalence classes because of the congruence
rules.

We next prove that this defines a category. Identity arrows are easily constructed:

\begin{lemma}\label{lem:weak_term_arr}
If $f:\Delta\to\Gamma$ then $f:\Delta,x:A\to\Gamma$.
\end{lemma}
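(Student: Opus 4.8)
The plan is to proceed by induction on the derivation witnessing that $f$ is an arrow $\Delta\to\Gamma$, which amounts to a case analysis on the three clauses defining arrows in the term model (equivalently, an induction on the way the codomain $\Gamma$ is built up). The guiding observation is that an arrow is nothing but a sequence of (equivalence classes of) terms together with side conditions concerning locks; weakening the domain by a fresh variable leaves this data literally unchanged, so the only work is to re-establish the typing side conditions in the larger context $\Delta,x:A$. Throughout I assume, as is implicit in the statement, that $\Delta,x:A$ is itself a well-formed context, i.e. $\Delta\vdash A$ with $x$ fresh.

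For the base clause, $\Gamma=\cdot$ and $f$ is the empty sequence; the empty sequence is equally an arrow $\Delta,x:A\to\cdot$, so there is nothing to prove. For the variable clause, $\Gamma=\Gamma_0,y:B$ and $f=[t/y]\circ f_0$ for some $f_0:\Delta\to\Gamma_0$, type $\istype{\Gamma_0}{B}$, and term $\isterm{\Delta}{t}{B\,f_0}$. By the induction hypothesis $f_0:\Delta,x:A\to\Gamma_0$, and since $f_0$ is the very same sequence of terms, the type $B\,f_0$ is the same syntactic expression. Variable weakening (Lemma~\ref{lem:syntactic_sanity} part 2) then gives $\isterm{\Delta,x:A}{t}{B\,f_0}$, so $[t/y]\circ f_0$ is an arrow $\Delta,x:A\to\Gamma_0,y:B$ built from the weakened $f_0$ and the unchanged $t$.

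The lock clause is where one must be slightly careful, but in fact it goes through with no appeal to the induction hypothesis. Here $\Gamma=\Gamma_0,\lock$ and the domain has the form $\Delta_1,\lock,\Delta_1'$ with no locks in $\Delta_1'$, arising from an underlying arrow $f:\Delta_1\to\Gamma_0$. Weakening the domain at its right-hand end produces $\Delta_1,\lock,\Delta_1',x:A$, and since a variable declaration contains no lock, the tail $\Delta_1',x:A$ is still lock-free. Hence the lock clause applies directly to the same underlying arrow $f:\Delta_1\to\Gamma_0$, yielding $f:\Delta_1,\lock,\Delta_1',x:A\to\Gamma_0,\lock$, which is exactly $f:\Delta,x:A\to\Gamma$.

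I expect the main (and only) real obstacle to be the variable case, namely checking that each component term of $f$ remains well-typed once the domain is enlarged; this is precisely the content of the syntactic weakening principle recorded in Lemma~\ref{lem:syntactic_sanity}. The apparent subtlety in the lock case — that weakening might disturb the lock-freeness condition on the tail $\Delta_1'$ — dissolves immediately, because the declaration $x:A$ introduces no lock and so the freshly added variable simply lengthens the lock-free suffix of the domain.
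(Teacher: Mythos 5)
Your proof is correct and follows essentially the same route as the paper's: induction on the construction of the arrow, with the variable case handled by the induction hypothesis plus syntactic variable weakening (Lemma~\ref{lem:syntactic_sanity} part~2), and the lock case handled by noting that appending $x:A$ merely lengthens the lock-free suffix so the lock clause reapplies directly. Your added observation that the lock case needs no induction hypothesis is a fair, slightly more explicit reading of what the paper leaves implicit.
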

\begin{proof}
By induction on the construction on $f$. The base case is trivial.

Given $f:\Delta\to\Gamma$ and $\isterm{\Delta}{t}{B\,f}$, by induction we
have $f:\Delta,x:A\to\Gamma$ and by variable weakening we have
$\isterm{\Delta,x:A}{t}{B\,f}$ as required.

Supposing we have $f:\Delta\to\Gamma$ yielding $f:\Delta,\lock,\Delta'\to
\Gamma$, we could similarly get $f:\Delta,\lock,\Delta',x:A\to\Gamma$.
\end{proof}

The identity on $\Gamma$ simply replaces all variables by themselves.

\begin{lemma}
The identity on each $\Gamma$ is well defined as an arrow.
\end{lemma}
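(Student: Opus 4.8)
The plan is to induct on the construction of the well-formed context $\Gamma$, using the three context-formation rules, and to assemble the identity arrow from the three matching clauses in the definition of an arrow $\Delta\to\Gamma$. In the base case $\Gamma=\Emp$, the identity is the empty sequence, which is an arrow $\Emp\to\Emp$ by the first clause. In each inductive case I will produce, from the identity on the shorter context, an arrow that substitutes every variable of $\Gamma$ by itself, and which is therefore the identity described informally above.

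For the variable case $\Gamma=\Gamma_0,x:A$ (with $x\notin\Gamma_0$), the induction hypothesis gives the identity arrow $\id_{\Gamma_0}:\Gamma_0\to\Gamma_0$; weakening it by Lemma~\ref{lem:weak_term_arr} yields $\id_{\Gamma_0}:\Gamma_0,x:A\to\Gamma_0$. The variable rule gives $\isterm{\Gamma_0,x:A}{x}{A}$, and because $A\,\id_{\Gamma_0}=A$ (discussed below) the second clause applies, producing the arrow $[x/x]\circ\id_{\Gamma_0}:\Gamma_0,x:A\to\Gamma_0,x:A$, which replaces each variable by itself. For the lock case $\Gamma=\Gamma_0,\lock$, the induction hypothesis gives $\id_{\Gamma_0}:\Gamma_0\to\Gamma_0$, and taking $\Delta'=\Emp$ in the third clause — noting that $\Gamma_0,\lock$ is well-formed by hypothesis — makes $\id_{\Gamma_0}$ an arrow $\Gamma_0,\lock\to\Gamma_0,\lock$. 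Since locks are not variables, this again substitutes every variable by itself, so it is the required identity.

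The step needing care, and the main obstacle, is the equation $A\,\id_{\Gamma_0}=A$ used in the variable case: to invoke the second clause I must know that the recursively defined identity substitution acts trivially on a type $A$ over $\Gamma_0$. This reduces to the purely syntactic fact that substituting each variable simultaneously by itself leaves $A$ unchanged, together with the observation that the weakening of Lemma~\ref{lem:weak_term_arr} alters only the context in which the substituting terms are typed, not the terms themselves. If one prefers to argue up to type equality rather than by a syntactic identity, one can instead strengthen the induction hypothesis to record that $X\,\id_{\Gamma_0}$ and $X$ are equal for every type or term $X$ over $\Gamma_0$, and then type $x$ at $A\,\id_{\Gamma_0}$ by the conversion rule; either route discharges the remaining obligation and completes the construction.
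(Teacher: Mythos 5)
Your proof is correct and follows essentially the same route as the paper's: induction on the context, with the empty sequence for $\Emp$, Lemma~\ref{lem:weak_term_arr} plus the variable rule for the variable case, and the third arrow clause (with empty $\Delta'$) for the lock case. The only difference is that you make explicit the fact that the identity substitution acts trivially on $A$ (so that $\isterm{\Gamma_0,x:A}{x}{A\,\id_{\Gamma_0}}$ holds), a point the paper leaves implicit; your discussion of it is a welcome addition rather than a deviation.
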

\begin{proof}
By induction on $\Gamma$. The identity on $\cdot$ is the empty
sequence of substitutions. Given $id:\Gamma\to\Gamma$, we have $id:
\Gamma,x:A\to\Gamma$ by Lemma~\ref{lem:weak_term_arr}, and
$\isterm{\Gamma,x:A}{x}{A}$ as required. $id:\Gamma\to\Gamma$ immediately
yields $id:\Gamma,\lock\to\Gamma,\lock$.
\end{proof}

The composition case is slightly more interesting:

\begin{lemma}\label{Lem:arrow_on_judg}
Given $\Gamma,\Gamma'\vdash\Jud$ and $f:\Delta\to\Gamma$, we have
$\Delta,\Gamma'\,f\vdash\Jud\,f$.
\end{lemma}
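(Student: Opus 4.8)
The plan is to argue by induction on the construction of the arrow $f:\Delta\to\Gamma$, following the three clauses that define arrows of the term model, and to keep $\Gamma'$ and $\Jud$ universally quantified throughout, so that the induction hypothesis can be reapplied to the enlarged extensions that arise in the inductive steps.

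Consider first the clause $f=[t/x]\circ f_0$, where $f_0:\Delta\to\Gamma_0$, $\Gamma=\Gamma_0,x:A$, and $\isterm{\Delta}{t}{A\,f_0}$. Given $\Gamma_0,x:A,\Gamma'\vdash\Jud$, I would apply the induction hypothesis for $f_0$ to the extension $x:A,\Gamma'$, obtaining $\Delta,x:A\,f_0,\Gamma'\,f_0\vdash\Jud\,f_0$, and then substitute $t$ for $x$ using the syntactic substitution lemma, Lemma~\ref{lem:syntactic_sanity}(3); this is licensed precisely by $\isterm{\Delta}{t}{A\,f_0}$. Since applying $f_0$ and then replacing $x$ by $t$ is exactly the action of $f=[t/x]\circ f_0$ (note $t$ lives in $\Delta$ and so cannot mention $x$), the result is $\Delta,\Gamma'\,f\vdash\Jud\,f$.

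For the lock clause, the arrow $f:\Delta_0,\lock,\Delta'\to\Gamma_0,\lock$ is built from an arrow $f:\Delta_0\to\Gamma_0$ with $\Delta'$ lock-free, and it carries the same substitution data. Given $\Gamma_0,\lock,\Gamma'\vdash\Jud$, the induction hypothesis for the sub-arrow, applied to the extension $\lock,\Gamma'$, yields $\Delta_0,\lock,\Gamma'\,f\vdash\Jud\,f$, since the lock and the variables of $\Gamma'$ are untouched by $f$. It then remains only to insert the lock-free block $\Delta'$ immediately after the lock, which I would do by iterating variable weakening, Lemma~\ref{lem:syntactic_sanity}(2); the type-validity side conditions needed at each step are supplied by well-formedness of $\Delta_0,\lock,\Delta'$.

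The base case, where $\Gamma$ is empty and $f$ is the empty substitution, is where I expect the only genuine difficulty. Here the goal reduces to prepending $\Delta$ to a derivation of $\Gamma'\vdash\Jud$. If $\Delta$ were lock-free this would be iterated variable weakening, but $\Delta$ may contain locks, and it is noted earlier that weakening of locks is not admissible. The key observation is that this inadmissibility concerns \emph{appending} a lock on the right, which would interpose a lock between a variable or box and its point of use; \emph{prepending} material on the left never does this, so derivability is preserved. I would therefore first establish the auxiliary fact that $\Sigma\vdash\Jud$ implies $\Delta,\Sigma\vdash\Jud$ whenever $\Delta,\Sigma$ is well formed, by a routine induction on the derivation in which only the variable, $\open$ and $\shut$ rules need checking, and then invoke it to discharge the base case.
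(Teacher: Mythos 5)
Your proposal is correct and follows essentially the same route as the paper: induction on the construction of $f$, with the base case discharged by an auxiliary left-weakening lemma proved by induction on the typing rules, the extension clause by the induction hypothesis followed by Lemma~\ref{lem:syntactic_sanity}(3), and the lock clause by the induction hypothesis (your additional step of inserting the lock-free block $\Delta'$ via Lemma~\ref{lem:syntactic_sanity}(2) is a correct unfolding of what the paper dismisses as trivial). Your observation that left weakening avoids the inadmissibility of lock weakening because prepending never interposes a lock between a binder and its use is exactly the right justification for the base case.
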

\begin{proof}
By induction on the construction on $f$. The base case requires that
$\Gamma'\vdash\Jud$ implies $\Delta,\Gamma'\vdash\Jud$; this \emph{left
weakening} property is easily proved by induction on the typing rules.

Given $f:\Delta\to\Gamma$, $\isterm{\Delta}{t}{A\,f}$ and
$\Gamma,x:A,\Gamma'\vdash\Jud$, by induction
$\Delta,x:A\,f,\Gamma'\,f\vdash\Jud\,f$. Then by
Lemma~\ref{lem:syntactic_sanity} part 3 we have
$\Delta,\subst{(\Gamma'\,f)}{t}{x}\vdash\subst{(\Jud\,f)}{t}{x}$ as required.
The lock case is trivial.
\end{proof}

The composition of $f:\Delta\to\Delta'$ and $g:\Delta'\to\Gamma$ involves
replacing each $[t/x]$ in $g$ with $[t\,f/x]$.

\begin{lemma}
The composition of two arrows $f:\Delta\to\Delta'$ and $g:\Delta'\to\Gamma$
is a well-defined arrow.
\end{lemma}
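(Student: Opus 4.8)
The plan is to verify that the composition operation described just before the statement — namely, taking $f:\Delta\to\Delta'$ and $g:\Delta'\to\Gamma$ and replacing each substitution $[t/x]$ occurring in $g$ by $[t\,f/x]$ — produces a sequence that satisfies the inductive definition of an arrow $\Delta\to\Gamma$. Since arrows are defined by induction on the construction of the \emph{codomain} context (the empty sequence for $\cdot$, a substitution $[t/x]\circ h$ for an extension by a variable, and a lock-passing step for an extension by $\lock$), I would proceed by induction on the construction of $g$.

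First I would dispatch the base case: if $g:\Delta'\to\cdot$ is the empty sequence, then its composite with $f$ is again the empty sequence, which is a well-defined arrow $\Delta\to\cdot$. Next, for the variable-extension case, suppose $g = [t/x]\circ g_0$ where $g_0:\Delta'\to\Gamma_0$, $\istype{\Gamma_0}{A}$, and $\isterm{\Delta'}{t}{A\,g_0}$, so that $\Gamma = \Gamma_0,x:A$. By the induction hypothesis, $g_0\circ f:\Delta\to\Gamma_0$ is a well-defined arrow. To extend this to an arrow into $\Gamma_0,x:A$ I need a term of type $A\,(g_0\circ f)$ over $\Delta$; the candidate is $t\,f$, so the key obligation is $\isterm{\Delta}{t\,f}{A\,(g_0\circ f)}$. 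This is where Lemma~\ref{Lem:arrow_on_judg} does the work: applying the substitution $f$ to the judgement $\isterm{\Delta'}{t}{A\,g_0}$ yields $\isterm{\Delta}{t\,f}{(A\,g_0)\,f}$, and I must then argue that $(A\,g_0)\,f$ and $A\,(g_0\circ f)$ are equal types, i.e. that applying $g_0$ then $f$ agrees with applying the composite substitution $g_0\circ f$. Finally, the lock case: if $g:\Delta'\to\Gamma_0$ is reinterpreted as $g:\Delta',\lock,\Delta''\to\Gamma_0,\lock$ with no locks in $\Delta''$, then since $f:\Delta\to\Delta'$ the composite is obtained by the same lock-passing step, and one checks that the relevant context $\Delta,\lock,\dots$ remains well-formed.

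The main obstacle I expect is precisely the compatibility of substitution with composition needed in the variable case: showing $(A\,g_0)\,f = A\,(g_0\circ f)$ (and the analogous equation at the level of terms, to know the composite is independent of representatives and respects the equivalence classes). This is the functoriality of applying substitutions, and it must itself be established by induction on the structure of $A$ (or on $g_0$), threading through the way single substitutions $[t/x]$ compose. Everything else is bookkeeping: the lock case is essentially immediate because composition leaves lock-passing steps untouched, and the well-definedness modulo equality follows from the congruence rules already invoked in the definition of arrows, so I would note these briefly and concentrate the argument on the substitution-composition identity.
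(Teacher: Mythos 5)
Your induction on the construction of $g$, with the base case and the variable-extension case handled via Lemma~\ref{Lem:arrow_on_judg}, matches the paper's proof, and your worry about the identity $(A\,g_0)\,f = A\,(g_0\circ f)$ is a fair point that the paper leaves implicit. However, your treatment of the lock case has a genuine gap, and that case is precisely where the paper's proof does its real work. When $g$ arises from the lock rule, its domain is $\Delta',\lock,\Delta''$, so the arrow $f$ you are composing with must have codomain $\Delta',\lock,\Delta''$ --- not $\Delta'$ as you write. You cannot simply ``apply the same lock-passing step'': to invoke that rule you need, in hand, an arrow into $\Gamma_0$ from the part of $\Delta$ preceding some lock, and nothing in your argument produces one.

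The missing idea is an inversion on the construction of $f$. Since $f$ targets a context containing a lock, it must itself have been built by the lock rule from some $f':\Delta_0\to\Delta'$ with $\Delta=\Delta_0,\lock,\Delta_1$ (possibly followed by variable extensions covering $\Delta''$). The induction hypothesis then applies to $g\circ f':\Delta_0\to\Gamma_0$, the lock rule yields $g\circ f':\Delta\to\Gamma_0,\lock$, and one finishes by observing that $g\circ f'=g\circ f$ because the terms making up $g$ live over $\Delta'$ and hence do not mention the variables of $\Delta''$. Without this decomposition-and-identification step your composite in the lock case is not shown to be an arrow at all, so the case you dismiss as ``essentially immediate'' is in fact the only nontrivial one.
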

\begin{proof}
By induction on the definition of $g$. The base case is trivial, and extension by
a new substitution follows via Lemma~\ref{Lem:arrow_on_judg}.


Now suppose we have $g:\Delta'\to\Gamma$ yielding $g:\Delta',\lock,
\Delta''\to\Gamma,\lock$. Now if we have $f:\Delta\to\Delta',\lock,\Delta''$ this
must have arisen via some $f':\Delta_0\to\Delta'$ generating
$f':\Delta_0,\lock,\Delta_1\to\Delta',\lock$, where $\Delta=\Delta_0,\lock,
\Delta_1$. By induction we have well-defined $g\circ f':\Delta_0\to\Gamma$.
Hence $g\circ f':\Delta\to\Gamma,\lock$. But $g\circ f'=g\circ f$ because the
variables of $\Delta''$ do not appear in $g$.
\end{proof}

Checking the category axioms is straightforward.
The category definitions then extend to a CwF in the usual way: the terminal object is
$\Emp$, the families over $\Gamma$ are the types modulo equivalence
well-defined in context $\Gamma$, the elements of any such type are the
terms modulo equivalence, re-indexing is substitution, comprehension
corresponds to extending a context with a new variable, the projection
morphism is the replacement of variables by themselves, and the generic
element is given by the variable rule.

Moving to the definition of a CwDRA, the endofunctor $\L$ acts by mapping
$\Gamma\mapsto\Gamma,\lock$, and does not change arrows. The family
$\R_\Gamma A$ is the type $\istype{\Gamma}{\Box A}$, which is stable under
re-indexing by Lemma~\ref{lem:syntactic_sanity} part 3. The bijections
between families are supplied by the $\shut$ and $\open$ rules, with all
equations following from the definitional equalities.

We do not attempt to prove that the term model is the \emph{initial} CwDRA;
such a result for dependent type theories appears to require syntax be written
in a more verbose style than is appropriate for a paper introducing a new type
theory~\cite{Castellan:Dependent}. Nonetheless our type theory and notion of
model are close enough that we conjecture that such a development is
possible.

\section{A general construction of CwDRAs}
\label{sec:contruction-of-CwDRAs}

In this section we show how to construct a CwDRA from an adjunction of
endofunctors on a category with finite limits. We will refer to categories with
finite limits more briefly as \emph{cartesian} categories.
We will use this construction in Section~\ref{sec:examples} to prove that the examples mentioned in the introduction can indeed be presented as \CwDRAs.
Our construction is an extension of the local universe construction~\cite{LumsdainePL:locumo}, which maps cartesian categories to categories with families, and locally cartesian closed categories to categories with families with $\Pi$- and $\Sigma$-types.
The local universe construction is one of the known solutions to the problem of constructing a strict model of type theory out of a locally cartesian closed category (see 
\cite{hofmann1994interpretation,LumsdainePL:locumo,LumsdainePL:simmuf,hofmann1997syntax} for discussions of alternative approaches to 'strictification').

We first recall the local universe construction.
Since it can be traced back to Giraud's work on fibred
categories~\cite{Giraud:Cohomologie}, we refer to it as the Giraud CwF
associated to a cartesian category.

\begin{definition}
\label{def:Giraud}
  Let $\C$ be a cartesian category. The 
  \textbf{Giraud CwF of $\C$} ($\Gir\C$) is the CwF whose underlying
  category is $\C$, and where a family $A\in\Fam[\Gir\C]{\Gamma}$ is a pair of
  morphisms
    \begin{equation} \label{eq:giraud:family}
      \begin{split}
        \xymatrix{& E\ar[d]^v\\\Gamma  \ar[r]_u & U}
      \end{split}
    \end{equation}
  and an element of $\Elt[\Gir\C]{\Gamma}{A}$, for $A=(u,v)\in\Fam[\Gir\C]{\Gamma}$, is a
  map $a:\Gamma\to E$ such that $v\circ a= u$. Reindexing of $A=(u,v)\in\Fam[\Gir\C]{\Gamma}$ and
    $a\in\Elt[\Gir\C]{\Gamma}{A}$ along
    $\gamma\in\Hom{\Delta}{\Gamma}$ are given by 
    \begin{align} 
      A[\gamma] &\defeq (u\comp \gamma,v)
                  \in\Fam[\Gir\C]{\Delta}\label{eq:8}\\ 
      a[\gamma] &\defeq a\comp \gamma \in\Elt[\Gir\C]{\Delta}{A[\gamma]} 
    \end{align}
  The comprehension $\co{\Gamma}{A}\in\C$, for 
  $A=(u,v)\in\Fam[\Gir\C]{\Gamma}$, is given by the pullback of diagram (\ref{eq:giraud:family}),
      \begin{equation*} 
      \begin{split}
        \xymatrix{\co{\Gamma}{A} \ar[d]_{\proj_A} \ar[r]^{\gen_A} & E\ar[d]^v\\\Gamma  \ar[r]_u & U}
      \end{split}
    \end{equation*}
  with projection morphism $\proj_A$ and generic element $\gen_A$ as indicated in the diagram.
  Note that $\gen_A$ is an element of $A[\proj_A] = (u\circ \proj_A, v)$ as required by commutativity
  of the pullback square. The pairing operation is obtained from the universal property of pullbacks.
\end{definition}

Note that the local universe construction does indeed yield a category with families; in particular, reindexing in $\Gir\C$ is strict as required, simply because reindexing is given by composition.

\begin{remark}\label{Psh}
  The name `local universe' derives from the similarity to Voevodsky's 
  use of a (global) universe $U$ to construct strict models of type 
  theory~\cite{Voevodsky:csys,LumsdainePL:simmuf} in which types in a 
  context $\Gamma$ are modelled as morphisms $\Gamma \to U$. In the local
  universe construction, the universe varies from type to type.   
\end{remark}

In fact, the local universe construction is functorial; a precise statement
requires a novel notion of CwF-morphism:

\begin{definition}\label{def:Gir-map}
  A \define{weak CwF morphism} $\R$ between CwFs consists of a functor $\R : \C\to\D$ between the underlying categories preserving the terminal object, an operation on families mapping $A\in\Fam{\Gamma}$ to a family $\R A\in\Fam[\D]{\R\Gamma}$ and an operation on elements mapping $a\in\Elt{\Gamma}{A}$ to an element $\R a\in\Elt[\D]{\R\Gamma}{\R A}$, such that 
 \begin{enumerate}
\item The functor $\R : \C\to\D$ preserves terminal objects (up to isomorphism)
\item The operations on families and elements commute with reindexing in the sense that 
$\R A [\R \gamma] = \R (A[\gamma])$ and $\R t[\R \gamma] = \R (t[\gamma])$.
\item The maps $\pair{\R \proj_A}{\R \gen_A} : \R(\co{\Gamma}{A}) \to \co{\R\Gamma}{\R A}$ 
are isomorphisms for all $\Gamma$ and $A$. We write $\nu_{\Gamma,A}$ for the inverse.
\end{enumerate}
\end{definition}  

We note the following equalities as consequences of the axioms above. 
\begin{align}
 \R(\proj_A) \circ \nu_{\Gamma,A} & = \proj_{\R A} \label{eq:nu:proj} \\
 \R(\gen_A)[\nu_{\Gamma,A}] & = \gen_{\R A} \label{eq:nu:gen} \\
 \nu_{\Gamma,A}\circ \pair{\R \gamma}{\R a} & = \R\pair{\gamma}a \label{eq:nu:pair} 
\end{align}
For example, the last of these is proved by postcomposing with the inverse of $\nu_{\Gamma,A}$
and noting 
\begin{align*}
  \pair{\R \proj_A}{\R \gen_A}\circ \R\pair{\gamma}a 
  & = \pair{\R \proj_A \circ \R\pair{\gamma}a}{\R \gen_A[\R\pair{\gamma}a]} \\
  & = \pair{\R (\proj_A \circ \pair{\gamma}a)}{\R (\gen_A[\pair{\gamma}a])} \\
  & = \pair{\R \gamma}{\R a}
\end{align*}

Note that a weak CwF morphism preserves comprehension and the terminal object only up to isomorphism instead of on the nose, as required by the
stricter notion of morphism of Dybjer~\cite[Definition 2]{Dybjer1996}.
Weak CwF morphisms sit between strict CwF-morphisms and pseudo-CwF morphisms~\cite{DBLP:journals/lmcs/CastellanCD17}. The latter allow substitution to be 
preserved only up to isomorphism satisfying a number of coherence conditions. Since weak
CwF morphisms preserve substitution on the nose, these are not needed here. 

 
\begin{theorem}
  \label{functor-lex}
  $\Gir$ extends to a 
  functor from the category of cartesian categories and
  finite limit preserving functors, to the category of CwFs with weak morphisms.
\end{theorem}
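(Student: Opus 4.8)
The plan is to define $\Gir$ on morphisms and then check functoriality, the bulk of the work being to verify that a finite-limit-preserving functor induces a \emph{weak} CwF morphism in the sense of Definition~\ref{def:Gir-map}. Given a finite-limit-preserving functor $F \colon \C \to \D$ between cartesian categories, I define $\Gir F \colon \Gir\C \to \Gir\D$ to have underlying functor $F$; to send a family $A = (u,v) \in \Fam[\Gir\C]{\Gamma}$ (with $u \colon \Gamma \to U$ and $v \colon E \to U$) to the family $(Fu, Fv) \in \Fam[\Gir\D]{F\Gamma}$; and to send an element $a \colon \Gamma \to E$ of $A$ to $Fa \colon F\Gamma \to FE$. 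The latter is well-typed as an element of $\Gir F(A)$ because $Fv \comp Fa = F(v \comp a) = Fu$ by functoriality.

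Next I verify the three conditions of Definition~\ref{def:Gir-map}. Condition (1), preservation of the terminal object up to isomorphism, is immediate since the terminal object is the empty limit and $F$ preserves finite limits. Condition (2), compatibility with reindexing, is a direct calculation from the definition of reindexing in $\Gir$ (equation~\eqref{eq:8}) together with functoriality: $\Gir F(A)[F\gamma] = (Fu \comp F\gamma, Fv) = (F(u \comp \gamma), Fv) = \Gir F(A[\gamma])$, and likewise $\Gir F(a)[F\gamma] = Fa \comp F\gamma = F(a \comp \gamma) = \Gir F(a[\gamma])$.

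The essential step is condition (3): that the comparison map $\pair{F\proj_A}{F\gen_A} \colon F(\co{\Gamma}{A}) \to \co{F\Gamma}{\Gir F(A)}$ is an isomorphism. This is exactly where finite-limit preservation does the real work. By Definition~\ref{def:Giraud} the comprehension $\co{\Gamma}{A}$ is the chosen pullback of the cospan $(u,v)$, with legs $\proj_A$ and $\gen_A$; since $F$ preserves pullbacks, $F(\co{\Gamma}{A})$ together with the legs $F\proj_A$ and $F\gen_A$ is again a pullback of the cospan $(Fu, Fv)$. On the other hand $\co{F\Gamma}{\Gir F(A)}$ is, by construction in $\Gir\D$, the chosen pullback of this same cospan $(Fu, Fv)$. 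The map $\pair{F\proj_A}{F\gen_A}$ is precisely the mediating morphism between these two pullbacks of a common cospan, and is therefore an isomorphism by the universal property; its inverse is the $\nu_{\Gamma,A}$ of Definition~\ref{def:Gir-map}. This is the one place where the argument would break for a functor that failed to preserve pullbacks, and is the main obstacle.

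Finally I check functoriality of $\Gir$. Since the action of $\Gir F$ on families and elements is simply ``apply $F$'', identities and composites are preserved strictly: $\Gir(\id_\C)$ has identity underlying functor and identity action on families and elements, so by the CwF axiom $\pair{\proj_A}{\gen_A} = \id$ its comparison maps are identities and it is the identity weak morphism; and for composable $F \colon \C \to \D$ and $G \colon \D \to \mathcal{E}$ the underlying functor, family operation, and element operation of $\Gir G \comp \Gir F$ all coincide on the nose with those of $\Gir(G \comp F)$, so the two weak morphisms are equal. Here I use that weak CwF morphisms compose at all, the comparison map of a composite factoring as a composite of the component comparison isomorphisms and hence being itself an isomorphism.
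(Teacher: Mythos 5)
Your proposal is correct and follows essentially the same route as the paper's proof: define the weak morphism by applying $F$ to both components of a family and to elements, and obtain $\nu_{\Gamma,A}$ from the fact that $F$ preserves the pullback defining comprehension. You simply spell out the verification of Definition~\ref{def:Gir-map} and the functoriality of $\Gir$ in more detail than the paper does.
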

\begin{proof}
  Let $\R:\C\to\D$ be a finite limit preserving functor.
  For each $\Gamma\in\C$ and $A=(u,v)\in\Fam[\Gir\C]{\Gamma}$,
  we simply let $\R A \defeq (\R u,\R v)$. Likewise, for an element
  $a\in\Elt[\Gir\C]{\Gamma}{A}$, we let $\R a$ be the
  action of $\R$ on the morphism $a$. 
    Finally, since comprehension is defined by pullback and $\R$ preserves pullbacks up to isomorphism, we obtain the required $\nu_{\Gamma,A}$.
\end{proof}

We now embark on showing that if we apply the local universe construction to a cartesian category $\C$ with a pair of adjoint endofunctors, then the resulting CwF $\Gir\C$ is in fact a \CwDRA\ (Theorem~\ref{thm:giraud}).
To this end, we introduce the auxiliary notion of a category with families
with an adjunction:

\begin{definition}\label{def:CwF+A}
  A \define{CwF+A} consists of a CwF with an adjunction $\L\adj \R$ on the category of contexts, such that $\R$ extends to a weak CwF endomorphism.
\end{definition}

\begin{lemma}\label{CwDRAfromCwF+A}
  If $\C$ with the adjunction $\L\dashv \R$ is a CwF+A, then there is a \CwDRA\ structure on $\C$ with $\L$ as the required functor on $\C$.
\end{lemma}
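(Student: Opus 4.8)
The plan is to build the dependent right adjoint directly from the unit $\eta$ of the adjunction $\L\adj\R$ together with the weak CwF endomorphism structure on $\R$. The endofunctor required by Definition~\ref{def:cwf-a} is $\L$ itself. For $\Gamma\in\C$ and $A\in\Fam{\L\Gamma}$ I would set $\R_\Gamma A\defeq(\R A)[\eta_\Gamma]$, where $\R A\in\Fam{\R\L\Gamma}$ is the family-action of the weak CwF morphism and $\eta_\Gamma\in\Hom{\Gamma}{\R\L\Gamma}$ is the unit. Re-indexing stability \eqref{eq:11} then falls out of two ingredients: for $\gamma\in\Hom{\Delta}{\Gamma}$, naturality of $\eta$ gives $\eta_\Gamma\comp\gamma = \R\L\gamma\comp\eta_\Delta$, so $(\R_\Gamma A)[\gamma] = (\R A)[\eta_\Gamma\comp\gamma] = (\R A)[\R\L\gamma][\eta_\Delta]$, and the reindexing law of the weak CwF morphism (Definition~\ref{def:Gir-map}(2)) rewrites $(\R A)[\R\L\gamma]=\R(A[\L\gamma])$, yielding $\R_\Delta(A[\L\gamma])$.

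The substantial step is the bijection \eqref{eq:18}, and here the main obstacle is that a weak CwF morphism is \emph{not} assumed to act bijectively on elements, so $a\mapsto\R a$ cannot by itself supply the correspondence. Instead I would exhibit \eqref{eq:18} as a composite of four genuine bijections, using the comprehension isomorphism $\nu$. In any CwF, $\Elt{\L\Gamma}{A}$ is in bijection with the sections $s$ of $\proj_A$ (via $a\mapsto\pair{\id}{a}$), and $\Elt{\Gamma}{\R_\Gamma A}$ is in bijection with the morphisms $h\in\Hom{\Gamma}{\co{\R\L\Gamma}{\R A}}$ satisfying $\proj_{\R A}\comp h=\eta_\Gamma$ (via $b\mapsto\pair{\eta_\Gamma}{b}$). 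The adjunction supplies a bijection $\Hom{\L\Gamma}{\co{\L\Gamma}{A}}\bij\Hom{\Gamma}{\R(\co{\L\Gamma}{A})}$, and I would check that it restricts to a bijection between sections of $\proj_A$ and morphisms $g$ with $\R\proj_A\comp g=\eta_\Gamma$: the forward inclusion is immediate from functoriality of $\R$ (if $\proj_A\comp s=\id$ then $\R\proj_A\comp(\R s\comp\eta_\Gamma)=\R(\proj_A\comp s)\comp\eta_\Gamma=\eta_\Gamma$), while the backward inclusion uses naturality of the counit together with the triangle identity $\epsilon_{\L\Gamma}\comp\L\eta_\Gamma=\id$. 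Post-composing with $\nu^{-1}_{\L\Gamma,A}=\pair{\R\proj_A}{\R\gen_A}$ and invoking \eqref{eq:nu:proj} turns the condition $\R\proj_A\comp g=\eta_\Gamma$ into $\proj_{\R A}\comp h=\eta_\Gamma$, matching the second bijection. Composing all four yields \eqref{eq:18}, and the inverse equations \eqref{eq:12} and \eqref{eq:13} hold automatically because the correspondence is a composite of bijections.

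To obtain a usable formula I would then compute the forward direction of this composite explicitly. Transposing $\pair{\id}{a}$ gives $\R\pair{\id}{a}\comp\eta_\Gamma$; applying $\nu^{-1}_{\L\Gamma,A}$ and using \eqref{eq:nu:pair} with $\R\id=\id$ rewrites this as $\pair{\id}{\R a}\comp\eta_\Gamma = \pair{\eta_\Gamma}{(\R a)[\eta_\Gamma]}$, whose associated element is exactly $(\R a)[\eta_\Gamma]$. Hence $\transp{a}=(\R a)[\eta_\Gamma]$. With this formula in hand, naturality \eqref{eq:14} follows by the same two ingredients as \eqref{eq:11}: $\transp{a}[\gamma] = (\R a)[\eta_\Gamma\comp\gamma] = (\R a)[\R\L\gamma][\eta_\Delta] = \R(a[\L\gamma])[\eta_\Delta] = \transp{a[\L\gamma]}$, using naturality of $\eta$ and the element-reindexing law of the weak CwF morphism. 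This verifies all clauses of Definition~\ref{def:cwf-a} and completes the CwDRA structure, with $\L$ as the required functor.
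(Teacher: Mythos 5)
Your construction coincides with the paper's: both set $\R_\Gamma A\defeq(\R A)[\eta_\Gamma]$ and arrive at $\transp{a}=(\R a)[\eta_\Gamma]$, with the inverse direction ultimately given by $\gen_A[\varepsilon\circ\L(\nu_{\L\Gamma,A}\circ\pair{\eta_\Gamma}{b})]$. Where you differ is in how the bijection \eqref{eq:18} is certified. The paper writes down explicit formulas for both directions, checks well-typedness of the backward one by a small calculation with \eqref{eq:nu:proj}, and then asserts that the two are mutually inverse using \eqref{eq:nu:gen} and \eqref{eq:nu:pair} --- a direct but somewhat computational verification that it leaves to the reader. You instead factor \eqref{eq:18} as a composite of four bijections (elements as sections of $\proj_A$, the hom-set adjunction restricted to sections over $\eta_\Gamma$, transport across $\nu^{-1}_{\L\Gamma,A}$, and the standard section--element correspondence for $\proj_{\R A}$), so that \eqref{eq:12} and \eqref{eq:13} hold for free, and you only need the one restriction argument (functoriality of $\R$ one way, naturality of $\varepsilon$ plus a triangle identity the other). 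This buys you a cleaner inverse check at the cost of having to extract the explicit formula afterwards, which you do correctly via \eqref{eq:nu:pair}; the explicit formula is then needed anyway for \eqref{eq:14}, which you prove exactly as the paper would. Both routes are complete and correct; yours arguably makes the ``adjointness'' content of the lemma more visible.
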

\begin{proof}
  We write $\eta$ for the unit of the adjunction.
  For a family $A\in\Fam{\L \Gamma}$, we define $\R_{\Gamma} A \in\Fam{\Gamma}$ to be $(\R A)[\eta]$.
  For an element $a\in\Elt{\L \Gamma}{A}$, we define its transpose $\transp{a}\in\Elt{\Gamma}{\R_\Gamma A}$ to be 
  $(\R a)[\eta]$.
  For the opposite direction, suppose $b\in\Elt{\Gamma}{\R_\Gamma A}$.
  Since $\pair{\eta}{b}:\Gamma \rightarrow \co{\R\L\Gamma}{\R A}$, we have that $\L(\nu_{\L\Gamma,A} 
  \circ \pair{\eta}{b}):\L\Gamma\rightarrow \L\R(\co{\L\Gamma}{A})$
  and thus we can define $\transp{b}\in\Elt{\L\Gamma}{A}$ to be the element $\gen_A[\varepsilon\circ \L(\nu_{\L\Gamma,A} 
  \circ \pair{\eta}{b})]$.
  Note that this is well typed because $\gen_A$ is an element of the family $A[\proj_A]$ and so $\transp b$ is an element of
\begin{align*}
  A[\proj_A \circ \varepsilon\circ \L(\nu \circ \pair{\eta}{b})] & =  A[\varepsilon\circ \L(\R\proj_A \circ \nu \circ \pair{\eta}{b})] \\
  & =  A[\varepsilon\circ \L(\proj_{\R A} \circ \pair{\eta}{b})] \\
  & =  A[\varepsilon\circ \L(\eta)] \\
  & = A
\end{align*}
using equation (\ref{eq:nu:proj}) in the second equality. These operations can be proved inverses of each other using the 
equations (\ref{eq:nu:gen}) and (\ref{eq:nu:pair}).
%
\end{proof}

Note that the conditions for a CwF+A are stronger than those for a \CwDRA; for
instance, a \CwDRA\ does not require $\R$ to be defined on the context
category.  We return to the relation between these constructions in
Section~\ref{sec:CwFAfromCwDRA}

\begin{lemma}\label{thm:giraud-CwF+A}
  If $\C$ is a cartesian category and $\L\dashv \R$ are adjoint endofunctors on $\C$, then $\Gir\C$ with the adjunction $\L\dashv \R$ is a CwF+A.
\end{lemma}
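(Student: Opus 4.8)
The plan is to recognise that Definition~\ref{def:CwF+A} asks for almost nothing beyond what is already in place. First, $\Gir\C$ is a CwF by Definition~\ref{def:Giraud}, and crucially its underlying category of contexts is $\C$ itself. Hence the given adjunction $\L\dashv\R$ of endofunctors on $\C$ is, verbatim, an adjunction on the category of contexts of $\Gir\C$. The only remaining obligation is to show that $\R$ extends to a weak CwF endomorphism of $\Gir\C$ in the sense of Definition~\ref{def:Gir-map}.

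For this I would simply invoke the functoriality of the local universe construction, Theorem~\ref{functor-lex}, which sends any finite-limit-preserving functor between cartesian categories to a weak CwF morphism between the associated Giraud CwFs. The one fact that makes this applicable is that $\R$, as a right adjoint (it has the left adjoint $\L$), preserves all limits that exist in $\C$; since $\C$ is cartesian, these include all finite limits. Thus $\R:\C\to\C$ is a finite-limit-preserving endofunctor of a cartesian category, and Theorem~\ref{functor-lex} yields a weak CwF endomorphism $\Gir\R$ of $\Gir\C$.

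Concretely, unwinding the proof of Theorem~\ref{functor-lex}, this endomorphism acts on a family $A=(u,v)$ by $\R A \defeq (\R u,\R v)$, on an element $a$ by applying $\R$ to the underlying morphism, and supplies the comprehension isomorphisms $\nu_{\Gamma,A}$ from the fact that $\R$ carries the pullback defining $\co{\Gamma}{A}$ to a pullback. These data satisfy the coherence equations of Definition~\ref{def:Gir-map} precisely because they were already verified in establishing that $\Gir$ is functorial. Combining the three observations—that $\Gir\C$ is a CwF, that $\L\dashv\R$ is an adjunction on its contexts, and that $\R$ extends to a weak CwF endomorphism—gives the CwF+A structure. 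I do not anticipate any genuine obstacle here: the single point worth stating explicitly is that right adjoints preserve limits, while all the naturality and coherence work was discharged once and for all in Theorem~\ref{functor-lex}.
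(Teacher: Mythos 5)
Your proposal is correct and follows exactly the paper's own argument: the adjunction already lives on the underlying category $\C$ of $\Gir\C$, and Theorem~\ref{functor-lex} applied to the finite-limit-preserving functor $\R$ (a right adjoint) supplies the weak CwF endomorphism. Your explicit remark that right adjoints preserve limits is the one step the paper leaves implicit, and is a worthwhile addition.
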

\begin{proof}
  We are already given an adjunction on the underlying category of $\Gir\C$.
  Theorem~\ref{functor-lex} constructs the weak CwF morphism.
\end{proof}

\begin{theorem}
  \label{thm:giraud}
  If $\C$ is a cartesian category and $\L\dashv \R$ are adjoint
  endofunctors on $\C$, then $\Gir\C$ has the structure of a \CwDRA.
\end{theorem}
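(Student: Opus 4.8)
The plan is to obtain the statement as an immediate composite of the two preceding lemmas, which were arranged precisely so as to factor this argument through the auxiliary notion of a CwF+A. First I would apply Lemma~\ref{thm:giraud-CwF+A}: since $\C$ is cartesian and $\L\dashv\R$ are adjoint endofunctors on $\C$, the Giraud CwF $\Gir\C$ together with the adjunction $\L\dashv\R$ forms a CwF+A. The only content there is that the underlying category of $\Gir\C$ is $\C$ itself (Definition~\ref{def:Giraud}), so the given adjunction on $\C$ is literally an adjunction on the context category, and that Theorem~\ref{functor-lex} promotes the finite-limit-preserving functor $\R$ to a weak CwF endomorphism.

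Then I would invoke Lemma~\ref{CwDRAfromCwF+A} directly: from any CwF+A one extracts a \CwDRA\ whose required context endofunctor is $\L$. Concretely this supplies the family operation $\R_\Gamma A \defeq (\R A)[\eta]$ and the transpose bijection \eqref{eq:18}, and the lemma has already verified the reindexing-stability equations \eqref{eq:11} and \eqref{eq:14} together with the inverse laws \eqref{eq:12} and \eqref{eq:13}. Composing the two lemmas therefore yields the desired \CwDRA\ structure on $\Gir\C$, with $\L$ as the functor on $\C$, and no further computation is needed.

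Since the substantive work has been discharged inside Lemma~\ref{CwDRAfromCwF+A}, there is no real obstacle remaining at this stage; the theorem is essentially a corollary. If one wished to expand the proof rather than merely chain the lemmas, the one point deserving care would be the well-typedness of the transpose $\transp{b} = \gen_A[\varepsilon\circ\L(\nu\circ\pair{\eta}{b})]$, which relies on equation \eqref{eq:nu:proj} and the triangle identity $\varepsilon\circ\L\eta=\id$; but this is exactly the calculation already carried out in the proof of Lemma~\ref{CwDRAfromCwF+A}, so I would simply cite it rather than repeat it.
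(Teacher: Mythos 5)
Your proposal is correct and follows exactly the paper's own route: the paper proves Theorem~\ref{thm:giraud} precisely by chaining Lemma~\ref{thm:giraud-CwF+A} (which uses Theorem~\ref{functor-lex} to make $\Gir\C$ with $\L\dashv\R$ a CwF+A) with Lemma~\ref{CwDRAfromCwF+A}. The extra remarks you add about where the substantive work lives are accurate but not needed.
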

\begin{proof}
By Lemmas~\ref{thm:giraud-CwF+A} and~\ref{CwDRAfromCwF+A}.
\end{proof}

The above Theorem~\ref{thm:giraud} thus provides a general construction of \CwDRAs.
In Section~\ref{sec:examples} we use it to present examples from the literature.
As mentioned earlier, the local universe construction interacts well with other type formers: If we start with a locally cartesian closed category $\C$ (with W-types, Id-types and a universe), then $\Gir\C$ also models dependent products $\Pi$ and sums $\Sigma$ (and W-types, Id-types and a universe); see~\citeasnoun{LumsdainePL:locumo}.
In Section~\ref{sec:universes} we consider universes.

\subsection{CwF+A from a CwDRA}\label{sec:CwFAfromCwDRA}

In this subsection we show how to produce a CwF+A from a \CwDRA\ under the assumption that the CwF is \emph{democratic}.
Intuitively, a democratic CwF is one where every context comes from a type, and hence it is not surprising that for a democratic CwDRA one can use the action of the dependent right adjoint on families to define a right adjoint on contexts.

\begin{definition}
  A CwF is \define{democratic}~\cite{Clairambault2011} if for every context $\Gamma$ there is a family $\widehat{\Gamma}\in\Fam{\Terminal}$ and an isomorphism $\zeta_\Gamma:\Gamma \rightarrow \co{\Terminal}{\widehat{\Gamma}}$.
\end{definition}

\begin{theorem}\label{thm:giraud-converse}
  Let $\C$ be a democratic \CwDRA. The endofunctor ${\L}:\C\to\C$, part of the \CwDRA\ structure,
  has a right adjoint $\R$.
\end{theorem}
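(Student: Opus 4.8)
The plan is to exploit democracy to present every context as an extension of the terminal object, and then to build $\R$ at the object level out of the family-level operation $\R_\Terminal$ over $\Terminal$, recovering the adjunction from the dependent-right-adjoint bijection (\ref{eq:18}) together with reindexing stability (\ref{eq:11}). First I would record two consequences of the CwF structure. By the universal property of comprehension, for any $A\in\Fam{\Terminal}$ and any $\Delta$, sending a morphism to the element it classifies gives a bijection $\C(\Delta,\co{\Terminal}{A})\cong\Elt{\Delta}{A[\bang_\Delta]}$, natural in $\Delta$ (naturality being the pairing law $\pair{\gamma}{a}\circ\delta=\pair{\gamma\circ\delta}{a[\delta]}$ of Definition~\ref{def:cwf}); here $\bang_\Delta:\Delta\to\Terminal$ is the unique map, which is forced since the first component of the classified pair lands in $\Terminal$. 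Postcomposing with the democracy isomorphism $\zeta_\Gamma$ then yields a bijection $\C(\Delta,\Gamma)\cong\Elt{\Delta}{\widehat\Gamma[\bang_\Delta]}$, natural in $\Delta$.

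Next I define the object part. For a context $\Gamma$ set $B_\Gamma\defeq\widehat\Gamma[\bang_{\L\Terminal}]\in\Fam{\L\Terminal}$ and put
\[
\R\Gamma \defeq \co{\Terminal}{\R_\Terminal B_\Gamma}.
\]
The heart of the proof is the chain of bijections, natural in $\Delta$:
\begin{align*}
\C(\L\Delta,\Gamma) &\cong \Elt{\L\Delta}{\widehat\Gamma[\bang_{\L\Delta}]} \\
&= \Elt{\L\Delta}{B_\Gamma[\L\bang_\Delta]} \\
&\cong \Elt{\Delta}{\R_\Delta(B_\Gamma[\L\bang_\Delta])} \\
&= \Elt{\Delta}{(\R_\Terminal B_\Gamma)[\bang_\Delta]} \\
&\cong \C(\Delta,\R\Gamma).
\end{align*}
The outer two bijections are the natural bijection from the first paragraph (applied at $\L\Delta$ and at $\Delta$). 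The second line uses the factorization $\bang_{\L\Delta}=\bang_{\L\Terminal}\circ\L\bang_\Delta$, which holds by uniqueness of maps into $\Terminal$, together with $\widehat\Gamma[\gamma\circ\delta]=\widehat\Gamma[\gamma][\delta]$. The middle bijection is the dependent right adjoint bijection (\ref{eq:18}) at $\Delta$, and the fourth line is an instance of reindexing stability (\ref{eq:11}), namely $(\R_\Terminal B_\Gamma)[\bang_\Delta]=\R_\Delta(B_\Gamma[\L\bang_\Delta])$.

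It then remains to verify naturality in $\Delta$, i.e.\ that the whole composite commutes with precomposition by an arbitrary $\delta\in\C(\Delta',\Delta)$, where on the left $\delta$ acts as $-\circ\L\delta$. The outer bijections are natural by the first paragraph; the equality steps are preserved because terminal maps factor uniquely and $\L$ is a functor, so $\L\bang_\Delta\circ\L\delta=\L(\bang_\Delta\circ\delta)$; and the middle step is natural because the bijection (\ref{eq:18}) is natural in the base context, which is exactly equation (\ref{eq:14}) (with its dual (\ref{eq:15})), giving $\transp{a}[\delta]=\transp{a[\L\delta]}$. Once naturality is in place, the presheaf $\Delta\mapsto\C(\L\Delta,\Gamma)$ is represented by $\R\Gamma$ uniformly in $\Gamma$, and by the standard parametrized representability theorem this representation extends $\R$ uniquely to a functor $\C\to\C$ with $\L\dashv\R$. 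I expect the bookkeeping of this naturality—matching reindexing on the element side with precomposition on the morphism side across all five bijections—to be the only real obstacle, since every ingredient (the comprehension laws, uniqueness of $\bang$, (\ref{eq:11}), and (\ref{eq:14})/(\ref{eq:15})) is already available and the argument is a careful splicing rather than a new idea.
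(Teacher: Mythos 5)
Your proposal is correct and follows essentially the same route as the paper: the same definition $\R\Gamma=\co{\Terminal}{\R_\Terminal(\widehat\Gamma[\bang_{\L\Terminal}])}$ and the same chain of natural bijections via comprehension over $\Terminal$, democracy, reindexing stability (\ref{eq:11}), and the transpose bijection (\ref{eq:18}), with naturality in $\Delta$ supplied by (\ref{eq:14})/(\ref{eq:15}). The only cosmetic difference is that the paper constructs the action of $\R$ on morphisms explicitly (as $\R\gamma=\gamma^*_{\R\Gamma'}(\id_{\R\Gamma'})$) where you invoke parametrized representability, which amounts to the same thing.
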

 \begin{proof}
For $\Gamma\in\C$, we define $\R\Gamma\in\C$ by 
\begin{equation}
    \label{eq:17}
    \R \Gamma \defeq \co{\Terminal}{\R_\Terminal(\widehat{\Gamma}[!_{\L \Terminal}])}
  \end{equation}
  We have a bijection, natural in $\Delta$
  \begin{align*} \C(\Delta, \R \Gamma) & \cong \Elt{\Delta}{\R_\Terminal(\widehat{\Gamma} [!_{\L \Terminal}])) [!_\Delta])} \\
  &\cong \Elt{\Delta}{\R_\Delta (\widehat{\Gamma} [!_{\L \Delta}]))} \\
  &\cong \Elt{\L \Delta} {\widehat{\Gamma} [!_{\L \Delta}])}\\
  &\cong \C(\L \Delta, \co{\Terminal}{\widehat{\Gamma})}\\
  &\cong \C(\L\Delta, \Gamma)
  \end{align*}
  The last of the above bijections follows by composition with $\inv{\zeta_\Gamma}$.
  
  Let $\gamma:\Gamma' \to \Gamma$ we have then an action $ \gamma^*:\C(-,\R\Gamma')\to \C(-,\R \Gamma)$ given by
  \[\C(-,\R \Gamma') \cong \C(\L-,\Gamma') \xrightarrow{-\circ \gamma} \C(\L-,\Gamma) \cong \C(-,\R \Gamma)\]
  Define $\R \gamma = \gamma^*_{\R \Gamma'} (\id_{\R \Gamma'})$.
  Then the correspondence $\C(\Delta, \R \Gamma) \cong \C(\L\Delta, \Gamma)$ is natural in $\Gamma$, proving that 
  $\R$ is a right adjoint to $\L$. 
\end{proof}   
Consider a democratic \CwDRA, with $\C$ as the underlying category, and $\L\dashv \R$ the adjunction 
obtained from the above theorem. We then extend $\R$ to a weak CwF morphism by
defining, for a family  $A \in\Fam{\Gamma}$ and an element $a\in\Elt{\Gamma}{A}$,
\begin{align*}
  &\R A \defeq \R_{\R \Gamma} (A[\varepsilon]) &\R a \defeq \transp{a[\varepsilon]}
\end{align*}
where $\varepsilon:\L \R \Gamma \to \Gamma$ is the counit of the adjunction.
\begin{lemma}
\label{lem:R_w_morphism}
  $\R$ as defined above is a weak CwF morphism. In particular, for $A \in \Fam{\Gamma}$
  we have an isomorphism $\nu_{\Gamma,A}:\co{\R \Gamma}{\R A} \to \R(\co{\Gamma}{A})$, inverse to $\pair{\R \proj_A}{\R \gen_A}$.
\end{lemma}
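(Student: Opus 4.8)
The plan is to check the three requirements of Definition~\ref{def:Gir-map} for the right adjoint $\R$ produced by Theorem~\ref{thm:giraud-converse}, equipped with the operations $\R A \defeq \R_{\R\Gamma}(A[\varepsilon])$ and $\R a \defeq \transp{a[\varepsilon]}$, where $\varepsilon\colon\L\R\Rightarrow\mathrm{Id}$ is the counit. Requirement~1 is immediate: $\R$ is a right adjoint, hence preserves all limits that exist, so $\R\Terminal\cong\Terminal$.

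For requirement~2 I would expand each side and reduce to a common form by naturality of the counit. For $\gamma\in\C(\Delta,\Gamma)$ and $A\in\Fam{\Gamma}$, stability~\eqref{eq:11} gives $\R A[\R\gamma] = \R_{\R\Delta}(A[\varepsilon_\Gamma\circ\L\R\gamma])$, whereas $\R(A[\gamma]) = \R_{\R\Delta}(A[\gamma\circ\varepsilon_\Delta])$; these coincide because naturality of $\varepsilon$ at $\gamma$ reads $\varepsilon_\Gamma\circ\L\R\gamma = \gamma\circ\varepsilon_\Delta$. The element equation is identical, with~\eqref{eq:11} replaced by~\eqref{eq:14}: $\R t[\R\gamma] = \transp{t[\varepsilon_\Gamma\circ\L\R\gamma]} = \transp{t[\gamma\circ\varepsilon_\Delta]} = \R(t[\gamma])$.

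Requirement~3 is the substantial step, and rather than write $\nu_{\Gamma,A}$ out by hand I would show that postcomposition with $\pair{\R\proj_A}{\R\gen_A}$ is a bijection $\C(\Delta,\R(\co{\Gamma}{A}))\to\C(\Delta,\co{\R\Gamma}{\R A})$ natural in $\Delta$, whence the map is an isomorphism by Yoneda and we set $\nu_{\Gamma,A}$ to be its inverse. On the source, the adjunction transpose carries $f\colon\Delta\to\R(\co{\Gamma}{A})$ to $\varepsilon_{\co{\Gamma}{A}}\circ\L f\colon\L\Delta\to\co{\Gamma}{A}$, which the comprehension bijection of Definition~\ref{def:cwf} presents as the pair $(\sigma,a)$ with $\sigma = \proj_A\circ\varepsilon_{\co{\Gamma}{A}}\circ\L f$ and $a = \gen_A[\varepsilon_{\co{\Gamma}{A}}\circ\L f]$; both steps are bijections. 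On the target, a map into $\co{\R\Gamma}{\R A}$ is a pair $(\tau,b)$ with $\tau = \R\proj_A\circ f$ and $b = \R\gen_A[f]$, and by the CwF pairing law postcomposition with $\pair{\R\proj_A}{\R\gen_A}$ is exactly $f\mapsto(\tau,b)$. I would then identify the two pairs componentwise: naturality of $\varepsilon$ at $\proj_A$ gives $\varepsilon_\Gamma\circ\L\tau = \proj_A\circ\varepsilon_{\co{\Gamma}{A}}\circ\L f = \sigma$, so $\tau$ is the transpose of $\sigma$; and the definition of $\R$ on elements together with~\eqref{eq:14} gives $b = (\transp{\gen_A[\varepsilon_{\co{\Gamma}{A}}]})[f] = \transp{\gen_A[\varepsilon_{\co{\Gamma}{A}}\circ\L f]} = \transp{a}$. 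A short computation from~\eqref{eq:11} confirms $\R A[\tau] = \R_\Delta(A[\sigma])$, so $b=\transp a$ is precisely the image of $a$ under the bijection~\eqref{eq:18}. Hence $f\mapsto(\tau,b)$ is the composite of the hom-set adjunction bijection and~\eqref{eq:18}, so it is a bijection, natural in $\Delta$ since every step is.

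I expect the main obstacle to lie in this final identification, where two distinct transposes are in play—the hom-set adjunction $\L\dashv\R$ and the fibrewise bijection~\eqref{eq:18}—and one must unfold the definitions of $\R$ on morphisms, families, and elements so that postcomposition with $\pair{\R\proj_A}{\R\gen_A}$ factors exactly as their product. The only tools needed are~\eqref{eq:11}, \eqref{eq:14}, and naturality of the counit, but applying them in the right order and tracking the well-typedness of $b$ is where the care goes.
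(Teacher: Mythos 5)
Your proof is correct and follows essentially the same route as the paper's: both establish a bijection $\C(\Delta,\co{\R\Gamma}{\R A})\cong\C(\Delta,\R(\co{\Gamma}{A}))$ natural in $\Delta$ by composing the hom-set adjunction transpose with the fibrewise bijection \eqref{eq:18} (using $(\R A)[\gamma]=\R_\Delta(A[\gamma^\top])$), and then conclude by Yoneda that $\pair{\R\proj_A}{\R\gen_A}$ is invertible. Your version is slightly more explicit in verifying conditions 1 and 2 of Definition~\ref{def:Gir-map} and in identifying the bijection componentwise with postcomposition by $\pair{\R\proj_A}{\R\gen_A}$, which the paper leaves as ``easy to check.''
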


\begin{proof}
We will show a bijection $\C(\Delta, \R \Gamma.\R A) \cong \C(\Delta, \R(\Gamma.A))$ natural in $\Delta$. We have
\begin{align*}
\C(\Delta,\R \Gamma.\R A) &\cong \prod_{\gamma:\C(\Delta,\R \Gamma)} \Elt{\Delta}{(\R A) [\gamma]}
\end{align*}
We have a bijection $-^\top: \C(\Delta,\R \Gamma) \cong \C(\L\Delta,\Gamma)$. But 
\[(\R A)[\gamma] = (\R_{\R \Gamma} A[\varepsilon])[\gamma] = \R_\Delta (A[\varepsilon \circ \L \gamma]) = \R_\Delta (A[\gamma^\top])\] 
Hence we have a bijection $\Elt{\Delta}{(\R A)[\gamma]} \cong \Elt{\L \Delta}{A[\gamma^\top]}$. So
\begin{align*}
\C(\Delta,\R \Gamma.\R A) &\cong \prod_{\gamma:\C(\Delta,\R \Gamma)} \Elt{\Delta}{(\R A) [\gamma]}\\
& \cong \prod_{\gamma':\C(\L\Delta,\Gamma)} \Elt{\L\Delta}{A[\gamma']}\\
& \cong \C(\L \Delta, \Gamma.A) \\
& \cong \C(\Delta,\R(\Gamma.A)) 
\end{align*}
By the Yoneda lemma, this implies $\R \Gamma.\R A \cong \C(\Delta,\R(\Gamma.A))$, and it is easy to check that the  
direction $\C(\Delta,\R(\Gamma.A)) \to \R \Gamma.\R A$ is given by $\pair{\R \proj_A}{\R \gen_A}$. 
\end{proof}

\begin{corollary}
\label{cor:Dem_cwFDRA_is_CwF+A}
A democratic CwDRA has the structure of CwF+A
\end{corollary}

\begin{remark}
  For a category $\C$ with a terminal object, the CwF $\Gir\C$ is democratic with $\widehat{\Gamma}$ given by the diagram:
  \[
    \xymatrix{& \Gamma\ar[d]^{!_\Gamma}\\{1}\ar[r]_{!_{1}} & 1}
  \]
\end{remark}
\begin{remark}
  For ordinary dependent type theory, the term model is a democratic CwF~\cite[Section 4]{DBLP:journals/lmcs/CastellanCD17}.
  However, the term model for our modal dependent type theory is \emph{not} democratic, since there is, for example, no type corresponding to the context $\lock$ consisting of just one lock. 
\end{remark}


\section{Examples}
\label{sec:examples}

We now present concrete examples of \CwDRA s generated from
cartesian categories with an adjunction of endofunctors, including those
mentioned in the introduction.

\medskip\noindent\textbf{$\Pi$ type with closed domain}\quad
Consider a CwF where the underlying category of contexts $\C$ is cartesian closed, and let $A$ be a closed type. 
We have then an adjunction of endofunctors ${-\times \co{\Terminal}{A} \adj -^{\co{\Terminal}{A}}}$ on 
$\C$, and suppose that the right adjoint
extends to a weak CwF endomorphism, giving the structure of a CwF+A. As we saw above, this happens e.g. when the 
CwF is of the form $\Gir\C$. In this case $\R_\Gamma B$ behaves as a type of the form $\Pi(x:A) B$ since 
$\C(\Gamma \vdash \R_\Gamma B) \cong \C(\Gamma \times \co\Terminal A \vdash B)
 \cong \C(\co\Gamma{(A[!_{\Gamma}])} \vdash B)$.

Thus, the notion of dependent right adjoint generalises $\Pi$ types with closed domain. This generalises to the setting where $\C$ carries the structure of a monoidal closed category, in which case the 
adjunction $-\otimes \co{\Terminal}{A} \adj {\co{\Terminal}{A}} \multimap (-)$ extends to give a dependent notion of linear
function space with closed domain. The next example is an instance of this. 

\medskip\noindent\textbf{Dependent name abstraction}\quad
The notion of \emph{dependent name abstraction} for families of nominal sets was introduced by Pitts et al.~\cite[Section~3.6]{PittsAM:deptta} to give a semantics for an extension of Martin-L\"of Type Theory with names and constructs for freshness and name-abstraction.
It provides an example of a \CwDRA\ that can be presented via Theorem~\ref{thm:giraud}.
In this case $\C$ is the category $\Nom$ of nominal sets and equivariant functions~\cite{PittsAM:nomsns}.
Its objects are sets $\Gamma$ equipped with an action of finite permutations of a fixed infinite set of atomic names $\Atom$, with respect to which the elements of $\Gamma$ are finitely supported, and its morphisms are functions that preserve the action of name permutations.
$\Nom$ is a topos (it is equivalent to the Schanuel topos~\cite[Section~6.3]{PittsAM:nomsns}) and hence in particular is cartesian.
We take the functor $\L:\Nom\rightarrow\Nom$ to be separated product~\cite[Section~3.4]{PittsAM:nomsns} with the nominal set of atomic names.
This has a right adjoint $\R$ that sends each $\Gamma\in\Nom$ to the nominal set of name abstractions $[\Atom]\Gamma$~\cite[Section~4.2]{PittsAM:nomsns} whose elements are a generic form of $\alpha$-equivalence class in the case that $\Gamma$ is a nominal set of syntax trees for some language.

Applying Theorem~\ref{thm:giraud}, we get a \CwDRA\ structure on $\Gir\Nom$.
In fact the CwF $\Gir\Nom$ has an equivalent, more concrete description in this case, in terms of \emph{families of nominal sets}~\cite[Section~3.1]{PittsAM:deptta}.
Under this equivalence, the value $\R_\Gamma A\in\Gir\Nom(\Gamma)$ of the dependent right adjoint at $A\in\Gir(\L\Gamma)$ corresponds to the family of \emph{dependent name abstractions} defined by~\citeasnoun[Section~3.6]{PittsAM:deptta}.
The bijection \eqref{eq:18} is given in one direction by the name abstraction operation~\cite[(40)]{PittsAM:deptta} and in the other by concretion at a fresh name~\cite[(42)]{PittsAM:deptta}.

\medskip\noindent\textbf{Guarded and Clocked Type Theory}\quad
Guarded recursion~\cite{Nakano:Modality} is an extension of type theory with a modal \emph{later} operator, denoted $\later$, on types, an operation $\nxt : A \to \later A$ and a guarded fixed point operator $\fix : (\later A \to A) \to A$ mapping $f$ to a fixed point for $f \circ \nxt$.
The standard model of guarded recursion is the topos of trees~\cite{birkedal2011first}, i.e. the category of presheaves on $\omega$, with $\later X(n+1) = X(n)$, $\later X(0) = 1$.
The later operator has a left adjoint $\earlier$, called \emph{earlier}, given by $\earlier X(n) = X(n+1)$, so $\later$ yields a dependent right adjoint on
the induced \CwDRA.

Birkedal et al.~\cite[Section 6.1]{birkedal2011first} show that $\earlier$ in a
dependently typed setting does not commute with reindexing. However it
\emph{does} have a left adjoint, namely the `stutter' functor $!$ with
$!X(0)=X(0)$ and $!X(n+1)=X(n)$, so $\earlier$ does give rise to a
well-behaved modality in the setting of this paper. This apparent contradiction is
resolved by the use of locks in the context: $\istype{\Gamma}{A}$ does not
give rise to a well-behaved $\istype{\Gamma}{\earlier A}$, but
$\istype{\Gamma,\lock}{A}$ does. This is an intriguing example of the
Fitch-style approach increasing expressivity.

%
%


Guarded recursion can be used to encode coinduction given a \emph{constant}
modality \cite{clouston2015programming}, denoted $\Box$, on the topos of trees, defined as $\Box X(n) = \lim_k X(k)$.
The $\Box$ functor is the right adjoint of the essential geometric morphism on $\hat{\omega}$ induced by $0 : \omega \to \omega$, the constant map to $0$, and hence it also yields a dependent right adjoint.
In~\citeasnoun{clouston2015programming}, $\Box$ was used in a
simple type theory, employing `explicit substitutions'
following~\citeasnoun{Bierman:Intuitionistic}.
As we will discuss in Section~\ref{sec:discussion} this approach proved difficult
to extend to dependent types, and we wish to use the modal dependent type
theory of the present paper to study $\Box$ in dependent type theory.

An alternative to the constant modality are the \emph{clock quantifiers}
of~\citeasnoun{atkey13icfp}, which unlike the constant modality 
have already been combined succesfully with dependent
types~\cite{Mogelberg:Type,BirkedalL:gdtt-conf}. They are also slightly more
general than the constant modality, as multiple clocks allow coinductive data
structures that unroll in multiple dimensions, such as infinitely-wide infinitely-deep
trees. The denotational semantics, however, are more
complicated, consisting of presheaves over a category of `time objects',
restricted to those fulfilling an `orthogonality' condition~\cite{GDTTmodel}. Nevertheless the
$\earlier \dashv \later$ adjunction of the topos of trees lifts to this category,
and so once again we may construct a CwDRA.

Clocked Type Theory (\clott)~\cite{bahr2017clocks} is a recent type theory for
guarded recursion that has strongly normalising reduction semantics, and
has been shown to have semantics in the category discussed
above~\cite{CloTTmodel}.
The operator $\later$ is refined to a form of dependent function type
$\latbind \tickA\kappa A$ over ticks $\tickA$ on clock $\kappa$.
Ticks can appear in contexts as $\Gamma, \tickA : \kappa$; these are similar to
the locks of Fitch-style contexts, except that ticks have names, and can be
weakened.
The names of ticks play a crucial role in controlling fixed point unfoldings.

Finally, the modal operator $\later$ on the topos of trees can be generalized to
the presheaf topos $\widehat{\C\times \omega}$ for any category $\C$,
simply by using the identity on $\C$ to extend the underlying functor (which
generates the essential geometric morphism) on $\omega$ to $\C\times \omega$.
In~\citeasnoun{GCTT} this topos, with $\C$ the cube category, is used to model guarded \emph{cubical} type theory; an extension of cubical type theory~\cite{CoquandT:cubttc}.
In more detail, one uses a CwF where families are certain \emph{fibrations}, and since $\later$ preserves fibrations, it does indeed extend to a \CwDRA.


\medskip\noindent\textbf{Cohesive Toposes}\quad
Cohesive toposes have also recently been considered as models of a form of modal type theory~\cite{shulman2018brouwer,2017arXiv170607526R}.
Cohesive toposes carry a triple adjunction $\int\dashv\flat\dashv\sharp$ and hence induce two dependent right adjoints.
Examples of cohesive toposes include simplicial sets $\hat{\Delta}$ and cubical sets $\hat{\Box}$; since these are presheaf toposes they also model universes.
For example, for simplicial sets, the triple of adjoints are given by the essential geometric morphism induced by the constant functor $0:\Delta\to \Delta$. In
the category of cubical sets $\sharp$ has a further right adjoint, used
by~\citeasnoun{Nuyts:Parametric} to reason about parametricity.

\medskip\noindent\textbf{Tiny objects}\quad
\citeasnoun{licata2018internal} use a `tiny' object $\mathbb{I}$ to
construct the fibrant universe in the cubical model of homotopy type
theory.  By definition, an object $\mathbb{I}$ of a category $\C$ is
\emph{tiny} if the exponentiation functor
$(-)^{\mathbb{I}}:\C\rightarrow\C$ has a right-adjoint, which they
denote by $\surd$. As for $\earlier$ above, the right adjoint functor
$\surd$ exists globally, but not locally; in other words, there is no
right adjoint to $(-)^{\top.\mathbb{I}}$ on each category of families
over an object $\Gamma\in\C$, stable under re-indexing
$\Gamma$ (except in the trivial case that $\mathbb{I}$ is
terminal). Nevertheless our present framework is still applicable: the
corresponding \emph{dependent} right adjoint for $(-)^{\mathbb{I}}$,
constructed as in Section~\ref{sec:contruction-of-CwDRAs}, plays an
important part in the construction of the fibrant universe given in
\cite{licata2018internal}.

\section{Universes}
\label{sec:universes}

In this section, we extend our modal dependent type theory with
universes.  For the semantics, we start from Coquand's notion of a
category with universes~\cite{Coquand:CwU}, which covers all presheaf
models of dependent type theory with universes. The notion of
\emph{category with universes} rests on the observation that in
presheaf models one can interpret an inverse $\code{-}$ to the usual
function $\El$ from codes to types, and hence obtain a simpler notion
of universe than
usual~\citeaffixed[section~2.1.6]{hofmann1997syntax}{such as in}.

\begin{definition}[category with universes]
  A \CwU\ is specified by:
  \begin{enumerate}
  \item A category $\C$ with a terminal object $\Terminal$.
  \item For each object $\Gamma\in\C$ and natural number $n\in\nats$, a set $\FamU{\Gamma}{n}$ of
    \emph{families at universe level $n$} over $\Gamma$.
  \item For each object $\Gamma\in\C$, natural number $n$, and family $A\in\FamU{\Gamma}{n}$, a
    set $\Elt{\Gamma}{A}$ of \emph{elements} (at some level) of the family $A$ over
    $\Gamma$.
  \item For each morphism
    $\gamma\in\Hom{\Delta}{\Gamma}$, \emph{re-indexing}
    functions $A\in\FamU{\Gamma}{n} \mapsto A[\gamma]\in\FamU{\Delta}{n}$ and
    $a\in\Elt{\Gamma}{A}\mapsto a[\gamma]\in\Elt{\Delta}{A[\gamma]}$,
    satisfying equations for associativity and identity as in a CwF.
  \item For each object $\Gamma\in\C$, number $n$ and family
    $A\in\FamU{\Gamma}{n}$, a \emph{comprehension object}
    $\co{\Gamma}{A}\in\C$ equipped with projections and generic elements
    satisfying equations as in a CwF.
  \item For each number $n$, a family $\U_{n}\in\FamU{\Terminal}{n+1}$, the
    \emph{universe at level $n$}.
  \item For each object $\Gamma\in\C$ and number $n$,
    a \emph{code} function $A\in\FamU{\Gamma}{n}\mapsto \code{A}\in\Elt{\Gamma}{\U_{n}[!_{\Gamma}]}$,
    and an \emph{element} function $u\in\Elt{\Gamma}{\U_{n}[!_{\Gamma}]} \mapsto
    \E u \in\FamU{\Gamma}{n}$, satisfying 
    $\code{A}[\gamma] = \code{A[\gamma]}$, 
    $\E \code{A} = A$, and $\code{\E u} = u$.
  \end{enumerate}
\end{definition}

We will of course want the universes to be closed under various
type-forming operations, but in this formalisation of universes
these definitions are just as for CwFs,
without having to explicitly reflect them into the universes.

\begin{lemma}\label{lem:1}
  The element function is stable under re-indexing:
  $(\E u)[\gamma] = \E(u[\gamma])$.
\end{lemma}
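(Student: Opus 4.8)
The plan is to reduce the claim to the two facts that the \CwU\ axioms already provide about codes: first, that $\code{-}$ and $\E$ are mutually inverse (via $\E\code{A}=A$ and $\code{\E u}=u$), so in particular $\code{-}:\FamU{\Gamma}{n}\to\Elt{\Gamma}{\U_n[!_\Gamma]}$ is a bijection; and second, that $\code{-}$ is stable under re-indexing, $\code{A}[\gamma]=\code{A[\gamma]}$. The strategy is then to transport the known stability of $\code{-}$ across this bijection in order to deduce the stability of its inverse $\E$.

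Concretely, fix $u\in\Elt{\Gamma}{\U_n[!_\Gamma]}$ and $\gamma\in\Hom{\Delta}{\Gamma}$; both $(\E u)[\gamma]$ and $\E(u[\gamma])$ then lie in $\FamU{\Delta}{n}$, so the equation is well-typed. First I would rewrite the left-hand side by replacing the family $(\E u)[\gamma]$ with $\E\code{(\E u)[\gamma]}$, which is justified by the axiom $\E\code{A}=A$. Next I would push the re-indexing inside the code using the stability axiom $\code{A[\gamma]}=\code{A}[\gamma]$ at $A=\E u$, and then simplify $\code{\E u}$ to $u$. This yields the chain
\[
  (\E u)[\gamma] = \E\code{(\E u)[\gamma]} = \E(\code{\E u}[\gamma]) = \E(u[\gamma]),
\]
which is exactly the desired equation. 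Equivalently, one may apply $\code{-}$ to both sides, observe that each reduces to $u[\gamma]$ by the same two axioms, and then invoke injectivity of $\code{-}$.

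There is no real obstacle here: the lemma is essentially a one-line consequence of the inverse relationship between $\E$ and $\code{-}$ together with the stipulated stability of $\code{-}$. The only point requiring a moment's care is to apply the code-stability axiom in the right direction — rewriting $\code{(\E u)[\gamma]}$ as $\code{\E u}[\gamma]$ rather than the reverse — and to keep track of the universe level $n$, which is preserved throughout since re-indexing does not change levels.
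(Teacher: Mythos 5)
Your proof is correct and is essentially identical to the paper's: the same three-step chain $(\E u)[\gamma] = \E\code{(\E u)[\gamma]} = \E(\code{\E u}[\gamma]) = \E(u[\gamma])$ using $\E\code{A}=A$, stability of $\code{-}$, and $\code{\E u}=u$.
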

\begin{proof}
  $(\E u)[\gamma] = \E\code{(\E u)[\gamma]} = \E(\code{\E u}[\gamma])
  = \E(u[\gamma])$.
\end{proof}

\begin{corollary}
\label{cor:generic family CwU}
  In a CwU there is a \emph{generic family} $\El\in\FamU{\co{\Terminal}{\U_{n}}}{n}$
  of types of level $n$ (for each $n\in\nats$), 
  with the property that 
  $\El[\pair{!_{\Gamma}}{\code{A}}] = A$, for all
  $A\in\FamU{\Gamma}{n}$.
\end{corollary}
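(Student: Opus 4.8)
The plan is to obtain the generic family directly as the decoding of the generic element sitting over the comprehension of the universe. Concretely, I would set
\[
  \mathsf{El} \defeq \E\,\gen_{\U_n} \in \FamU{\co{\Terminal}{\U_n}}{n},
\]
where $\gen_{\U_n}\in\Elt{\co{\Terminal}{\U_n}}{\U_n[\proj_{\U_n}]}$ is the generic element of the comprehension of $\U_n\in\FamU{\Terminal}{n+1}$. The first thing to check is that this expression is well typed: the level-$n$ element function $\E$ of the CwU accepts arguments in $\Elt{\Gamma}{\U_n[!_\Gamma]}$, so I need the projection $\proj_{\U_n}$ to coincide with the canonical map $!_{\co{\Terminal}{\U_n}}$ into the terminal object. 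This is immediate by terminality of $\Terminal$, after which $\E\,\gen_{\U_n}$ indeed lands in $\FamU{\co{\Terminal}{\U_n}}{n}$ as required.

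For the defining property, I would fix $A\in\FamU{\Gamma}{n}$ and simplify $\mathsf{El}[\pair{!_\Gamma}{\code{A}}]$ by rewriting along three equations in turn. First, Lemma~\ref{lem:1} lets me push the re-indexing inside the element function:
\[
  (\E\,\gen_{\U_n})[\pair{!_\Gamma}{\code{A}}] = \E\,(\gen_{\U_n}[\pair{!_\Gamma}{\code{A}}]).
\]
Next, the comprehension equation $\gen_A[\pair{\gamma}{a}]=a$ of Definition~\ref{def:cwf} (inherited by the CwU via part~5 of its definition), instantiated at the family $\U_n$, the morphism $!_\Gamma$ and the element $\code{A}$, gives $\gen_{\U_n}[\pair{!_\Gamma}{\code{A}}]=\code{A}$. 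Finally, the round-trip axiom $\E\code{A}=A$ of the CwU closes the computation, yielding $\mathsf{El}[\pair{!_\Gamma}{\code{A}}]=A$.

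I do not expect any genuine obstacle here; the only point requiring care is the bookkeeping identification $\proj_{\U_n}={!_{\co{\Terminal}{\U_n}}}$ that makes $\gen_{\U_n}$ an admissible argument for the level-$n$ element function. Everything else is a direct chain of the stability lemma, the comprehension law, and the equation $\E\code{A}=A$, so the argument is short once the generic family is defined as the decoding of the generic element.
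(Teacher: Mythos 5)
Your proposal is correct and matches the paper's proof essentially verbatim: the paper also defines $\El\defeq\E\gen$ after observing $\proj_{\U_n}={!_{\co{\Terminal}{\U_n}}}$, and derives the property from Lemma~\ref{lem:1} together with the comprehension law and $\E\code{A}=A$ (which the paper leaves implicit). Your write-up just spells out those last two steps explicitly.
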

\begin{proof}
  Since $\proj_{\U_n} =\ ! :\co{\Terminal}{}{\U_n}\to\Terminal$, 
  we have $\gen\in\Elt{\co{\Terminal}{\U_n}}{\U_n[!_{\co{\Terminal}{}{\U_n}}]}$
  and thus we can define $\El$ to be $\E\gen$, and then the required property follows
  by Lemma~\ref{lem:1}.
\end{proof}


For a CwU, there is an underlying CwF with families over $\Gamma$ given as 
$\Fam\Gamma = \bigcup_n \FamU\Gamma n$. 
Using this we can extend the definition of CwDRA to categories with
universes in the obvious way, as follows:

\begin{definition}[CwUDRA]
A \define{category with universe and dependent right adjoint (\CwUDRA)} is a CwU with the structure of a 
\CwDRA\ such that operation on types preserves universe levels in the sense that $A \in \FamU{\L \Gamma}n$ implies
$\R_\Gamma A \in \FamU\Gamma n$. 
\end{definition}

Similarly, one can extend the notion of CwF+A from Definition~\ref{def:CwF+A}
to the setting of universes:

%
\begin{definition}[CwU+A]
  A \define{weak CwU morphism} $\R$ is a weak CwF morphism on the underlying CwFs preserving size in the sense that 
  $A \in \FamU{\Gamma}n$ implies
$\R A \in \FamU{\R\Gamma}n$.
  A \define{CwU+A} consists of a CwU with an adjunction $\L\dashv\R$ on the category of contexts, such that $\R$ extends to a weak CwU morphism.
\end{definition}

The construction of Lemma~\ref{CwDRAfromCwF+A} extends to a construction of a CwUDRA from a
CwU+A. We now show (Lemma~\ref{lem:univ-endo-weak-morphism}) that the action of the right adjoint
on families and elements can be defined by just defining it on the universe as in the following definition.

\begin{definition}\label{def:amp}
  A \define{universe endomorphism} on a CwU is a finite limit
  preserving functor $\R$ on the category of contexts together with,
  for each $n$, a family $\Rl\in\FamU{\R(\co{\Terminal}{\U_n})}{n}$
  and an element
  $\r\in\Elt{\R(\co{\co{\Terminal}{\U_n}}{\El})}{\Rl[\R \proj]}$ such
  that the morphism
  \begin{equation}\label{eq:5}
    \xymatrix{{\R(\co{\co{\Terminal}{\U_n}}{\El})}
      \ar[rr]^{\pair{\R\proj}{\r}} \ar[dr]_{\R\proj} &&
      {\co{\R(\co{\Terminal}{\U_n})}{\Rl}} \ar[dl]^{\proj}\\
      & {\R(\co{\Terminal}{\U_n})}}
  \end{equation}
  over $\R(\co{\Terminal}{\U_n})$ is an isomorphism; in other words
  there is a morphism
  $\l: \co{\R(\co{\Terminal}{\U_n})}{\Rl} \to
  \R(\co{\co{\Terminal}{\U_n}}{\El})$ satisfying
  $\l\comp\pair{\R\proj}{\r} = \id$ and
  $\pair{\R\p}{\r}\comp\l = \id$.
\end{definition}

This means that we have a universe category endomorphism in the sense
of~\citeasnoun[Section~4.1]{Voevodsky:csys}: a family
$\Rl\in\FamU{\R(\co{\Terminal}{\U_n})}{n}$ gives a pullback square
with the morphism $\co{\co{\Terminal}{\U_n}}{\El}\to
\co{\Terminal}{\U_n}$ and the code function. The isomorphism above implies that the universe
$\R(\co{\co{\Terminal}{\U_n}}{\El}) \to \R(\co{\Terminal}{\U_n})$ is also
pullback of $\co{\co{\Terminal}{\U_n}}{\El}\to
\co{\Terminal}{\U_n}$ along the code function.

%

Given a CwU with a weak CwU morphism $\R$, then clearly $\R$ is a universe endomorphism, with $\Rl\defeq \R(\El)$, $\r\defeq \R\gen$ and $\l\defeq \nu$.
Conversely:

\begin{lemma}
  \label{lem:univ-endo-weak-morphism}
  Any CwU with a universe endomorphism $\R : \C \to \C$ extends to a weak CwU morphism.
\end{lemma}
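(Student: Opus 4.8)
The plan is to exploit the fact that, in a \CwU, every family is a reindexing of the single generic family $\El$, so that the action of $\R$ is forced by the data $\Rl$ and $\r$ from Definition~\ref{def:amp}. First I would fix $A\in\FamU{\Gamma}{n}$ and write $\chi_A\defeq\pair{!_\Gamma}{\code{A}}:\Gamma\to\co{\Terminal}{\U_n}$; by Corollary~\ref{cor:generic family CwU} this satisfies $A=\El[\chi_A]$. I then define the family action by
\[
  \R A \defeq \Rl[\R\chi_A]\in\FamU{\R\Gamma}{n},
\]
which visibly preserves the universe level, since $\Rl$ has level $n$. For an element $a\in\Elt{\Gamma}{A}$, the pairing $\pair{\chi_A}{a}:\Gamma\to\co{\co{\Terminal}{\U_n}}{\El{}}$ is well typed because $a\in\Elt{\Gamma}{\El[\chi_A]}$, so I set
\[
  \R a \defeq \gen_{\Rl}[\pair{\R\proj}{\r}\comp\R\pair{\chi_A}{a}].
\]
Using $\proj\comp\pair{\R\proj}{\r}=\R\proj$ and $\proj\comp\pair{\chi_A}{a}=\chi_A$ one checks that the first component of $\pair{\R\proj}{\r}\comp\R\pair{\chi_A}{a}$ is $\R\chi_A$, whence $\R a\in\Elt{\R\Gamma}{\Rl[\R\chi_A]}=\Elt{\R\Gamma}{\R A}$ as required.

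Next I would verify stability under reindexing. The key observation is that $\chi_{A[\gamma]}=\chi_A\comp\gamma$ for any $\gamma\in\Hom{\Delta}{\Gamma}$, which is immediate from $\code{A}[\gamma]=\code{A[\gamma]}$ and the pairing laws; consequently $\pair{\chi_{A[\gamma]}}{a[\gamma]}=\pair{\chi_A}{a}\comp\gamma$. The two equations $\R(A[\gamma])=(\R A)[\R\gamma]$ and $\R(a[\gamma])=(\R a)[\R\gamma]$ then fall out by functoriality of $\R$ and the CwF reindexing laws, simply by pulling $\R\gamma$ out of the outermost reindexing in each of the two definitions above.

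The substance of the proof is the third weak-CwF-morphism condition: that $\pair{\R\proj_A}{\R\gen_A}:\R(\co{\Gamma}{A})\to\co{\R\Gamma}{\R A}$ is an isomorphism, supplying the inverse $\nu_{\Gamma,A}$. I would argue this by a pullback-pasting argument. Since comprehension is stable under reindexing, the square with top edge $q\defeq\pair{\chi_A\comp\proj_A}{\gen_A}:\co{\Gamma}{A}\to\co{\co{\Terminal}{\U_n}}{\El{}}$, bottom edge $\chi_A$, and the evident projections is a pullback. Applying the finite-limit-preserving functor $\R$ keeps it a pullback, and postcomposing the top edge with the isomorphism $\pair{\R\proj}{\r}$ (which lies over $\R(\co{\Terminal}{\U_n})$) exhibits $\R(\co{\Gamma}{A})$ as the pullback of $\proj:\co{\R(\co{\Terminal}{\U_n})}{\Rl}\to\R(\co{\Terminal}{\U_n})$ along $\R\chi_A$. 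By stability of comprehension once more, that pullback is exactly $\co{\R\Gamma}{\Rl[\R\chi_A]}=\co{\R\Gamma}{\R A}$, so we obtain a canonical isomorphism $\R(\co{\Gamma}{A})\cong\co{\R\Gamma}{\R A}$.

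The remaining, and most delicate, step — which I expect to be the main obstacle — is to check that this canonical isomorphism is in fact $\pair{\R\proj_A}{\R\gen_A}$. I would do this by testing against the two legs of the pullback cone. Composition with $\proj$ yields $\R\proj_A$ on both sides. Composition with the comparison map $\bar q\defeq\pair{\R\chi_A\comp\proj_{\R A}}{\gen_{\R A}}:\co{\R\Gamma}{\R A}\to\co{\R(\co{\Terminal}{\U_n})}{\Rl}$ sends $\pair{\R\proj_A}{\R\gen_A}$ to $\pair{\R\chi_A\comp\R\proj_A}{\R\gen_A}$; and, unwinding the definition of $\R\gen_A$ (here one uses $\chi_{A[\proj_A]}=\chi_A\comp\proj_A$, so that $\R\gen_A=\gen_{\Rl}[\pair{\R\proj}{\r}\comp\R q]$), one sees that $\pair{\R\proj}{\r}\comp\R q$ is the same pair $\pair{\R\chi_A\comp\R\proj_A}{\R\gen_A}$. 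Uniqueness of maps into a pullback then forces $\pair{\R\proj_A}{\R\gen_A}$ to coincide with the canonical isomorphism, hence to be an isomorphism. Terminal-object preservation is automatic since $\R$ preserves finite limits, completing the verification that $\R$ is a weak CwU morphism.
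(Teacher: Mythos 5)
Your proposal is correct and follows essentially the same route as the paper: defining $\R A=\Rl[\R\pair{!_\Gamma}{\code{A}}]$, defining $\R a$ via $\r$ (your $\gen_{\Rl}[\pair{\R\proj}{\r}\comp\R\pair{\chi_A}{a}]$ simplifies to the paper's $\r[\R\pair{\chi_A}{a}]$ by the comprehension law $\gen[\pair{\gamma}{a}]=a$), checking reindexing stability via $\chi_{A[\gamma]}=\chi_A\comp\gamma$, and obtaining the comprehension isomorphism by pulling back the isomorphism $\pair{\R\proj}{\r}$ along $\R\chi_A$ using preservation of pullbacks and the pullback property of comprehension squares. Your final identification of the canonical isomorphism with $\pair{\R\proj_A}{\R\gen_A}$ is exactly the paper's step of instantiating the definition of $\R\gen$ at $a=\gen$.
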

\begin{proof}
    Given $A\in\C(\Gamma,n)$, since we have
  $\pair{!_{\Gamma}}{\code{A}}:\Gamma\fun\co{\Terminal}{\U_n}$, we can define
  \begin{equation}
    \label{eq:3}
    \R A \defeq \Rl[\R\pair{!_{\Gamma}}{\code{A}}] \in \FamU{\R\Gamma}n
  \end{equation}
  This is stable under re-indexing, since for $\gamma:\Delta\to \Gamma$
  \begin{align*}
     \R(A\,[\gamma]) & \defeq \Rl[\R\pair{!_{\Delta}}{\code{A\,[\gamma]}}] \\
     &= \Rl[\R\pair{!_{\Delta}}{\code{A}[\gamma]}] \\
     & = \Rl[\R(\pair{!_{\Gamma}}{\code{A}}\comp\gamma)] \\
     & = \Rl[\R\pair{!_{\Gamma}}{\code{A}} \comp\R\gamma] \\
     & = (\Rl[\R\pair{!_{\Gamma}}{\code{A}}])[\R\gamma] \\
     & \defeq (\R A)[\R\gamma]
  \end{align*}

  Given $a\in \Elt\Gamma A$, by Corollary~\ref{cor:generic family CwU} we have
  $a \in \Elt\Gamma{\El[\pair{!_{\Gamma}}{\code{A}}]}$ and hence
  \[\pair{\pair{!_{\Gamma}}{\code{A}}}{a}:\Gamma\fun\Terminal.\U_n.\El\]
  Therefore
  \[
    \r[\R \pair{\pair{!_{\Gamma}}{\code{A}}}{a}] \in\Elt{\R\Gamma}{(\Rl[\R\p])[\R
    \pair{\pair{!_{\Gamma}}{\code{A}}}{a}]}
  \]
  But
  $(\Rl[\R\p])[\R \pair{\pair{!_{\Gamma}}{\code{A}}}{a}] = \Rl[\R(\p\comp
  \pair{\pair{!_{\Gamma}}{\code{A}}}{a})] = \Rl[\R\pair{!_{\Gamma}}{\code{A}}) \defeq
  \R A$. We can therefore define
  \begin{equation}
    \label{eq:4}
    \R a \defeq \r[\R \pair{\pair{!_{\Gamma}}{\code{A}}}{a}] \in \Elt{\R\Gamma}{\R A}
  \end{equation}
  and this is stable under re-indexing, since for $\gamma:\Delta \to \Gamma$
  \begin{align*}
      (\R a)[\R \gamma] &\defeq \r[\R \pair{\pair{!_{\Gamma}}{\code{A}}}{a}] [\R
    \gamma] \\&= \r[\R \pair{\pair{!_{\Delta}}{\code{A}[\gamma]}}{a[\gamma]}] \\&=
    \r[\R \pair{\pair{!_{\Delta}}{\code{A[\gamma]}}}{a[\gamma]}] \\
    &\defeq\R(a[\gamma])
  \end{align*}
  
%
Finally we must show that $\R$ commutes with comprehension. For this, note that there are pullback squares
  \[
    \xymatrix{{\R(\co\Gamma A)} \pullbackcorner
      \ar[rrr]^{\R\pair{\pair{!_{\co\Gamma A}}{\code{A}[\proj_A]}}{\gen}} \ar[d]_{\R\proj} 
      &&&  \R(\co{\co{\Terminal}{\U_n}}{\El}) \ar[d]^{\R\p}\\
      {\R\Gamma}\ar[rrr]_{\R\pair{!_{\Gamma}}{\code{A}}} &&&
      {\R(\Terminal.\U_n)}}
    \qquad
    \xymatrix{{\R\Gamma.\R A} \pullbackcorner
      \ar[rrr]^{\pair{\R\pair{!_{\Gamma}}{\code{A}}\circ\proj}{\gen}} \ar[d]_\p &&&
      {\R(\Terminal.\U_n).\Rl} \ar[d]^\p\\
      {\R\Gamma}\ar[rrr]_{\R\pair{!_{\Gamma}}{\code{A}}} &&&
      {\R(\Terminal.\U_n)}}
  \]
  the former because the functor $\R$ preserves finite limits and the
  latter by definition of $\R A$. Applying \eqref{eq:4} with $a=\gen$
  we get that the pullback along $\R\pair{!_{\Gamma}}{\code{A}}$ of
  the morphism $\pair{\R\proj}{\r}$ in \eqref{eq:5} is
  \[
    \xymatrix{{\R(\Gamma. A)} \ar[rr]^{\pair{\R\proj}{\R\gen}}
      \ar[dr]_{\R\proj} &&
      {\R\Gamma.\R A} \ar[dl]^{\proj} \\
      & {\R\Gamma}}
  \]
  Then since $\pair{\R\proj}{\r}$ is an isomorphism, so is its
  pullback $\pair{\R\proj}{\R\gen}$, as required.
\end{proof}

\begin{remark}\label{sec:codes-image-r}
  We observe that for $\R$ as constructed above, the image under $\R$ of maps with $\U_n$-small fibers is classified by $\Rl\in\FamU{\R(\co{\Terminal}{\U_n})}{n}$. That is to say that $\pair {!_{\R \Gamma}} {\code{\R A}}= \pair{!_{\R\Gamma}} {\code{\Rl}} \circ \R \pair{!_\Gamma}{\code{A}}$ which is true by our choice of $\R A = \Rl [\R \pair{!_\Gamma}{\code{A}}]$. 
  Hence, the type of codes for such fibers is $\R$ applied to the codes for types.
  The same situation occurred for $\later$ in~\citeasnoun[V.5]{DBLP:conf/lics/BirkedalM13}, but was not observed at the time.
\end{remark}

\begin{theorem}
  \label{thm:cwudra}
  Any CwU equipped with an adjunction on the category of contexts whose
  right adjoint is a universe endomorphism can be given the structure
  of a CwUDRA.
\end{theorem}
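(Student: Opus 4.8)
The plan is to assemble this result from two facts already in hand: the passage from a universe endomorphism to a size-preserving weak CwU morphism (Lemma~\ref{lem:univ-endo-weak-morphism}), and the passage from an adjunction whose right adjoint is a weak morphism to a dependent right adjoint (Lemma~\ref{CwDRAfromCwF+A}, together with its announced extension to the universe-graded setting). Given a \CwU\ with an adjunction $\L\adj\R$ on contexts whose right adjoint $\R$ is a universe endomorphism in the sense of Definition~\ref{def:amp}, the goal is exactly to exhibit it as a \CwUDRA, so the proof is essentially a matter of checking that these two constructions compose and that the universe grading is tracked correctly throughout.

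First I would apply Lemma~\ref{lem:univ-endo-weak-morphism} to the universe endomorphism $\R$, obtaining an extension of $\R$ to a weak CwU morphism. By the construction given there one has $\R A\defeq\Rl[\R\pair{!_{\Gamma}}{\code{A}}]$ on families, and since $\Rl\in\FamU{\R(\co{\Terminal}{\U_n})}{n}$ and reindexing preserves levels, this extension automatically preserves size, i.e.\ $A\in\FamU{\Gamma}{n}$ gives $\R A\in\FamU{\R\Gamma}{n}$. Combined with the given adjunction $\L\adj\R$, this is exactly the data of a CwU+A.

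Next I would invoke the universe-graded version of Lemma~\ref{CwDRAfromCwF+A}, whose existence is noted immediately before Definition~\ref{def:amp}. That construction produces the dependent right adjoint by setting $\R_\Gamma A\defeq(\R A)[\eta]$ on families and $\transp a\defeq(\R a)[\eta]$ on elements, with the inverse transpose built from the comprehension isomorphism $\nu$ and the counit $\varepsilon$; the bijection laws \eqref{eq:12}--\eqref{eq:13} and stability \eqref{eq:14} are all verified using \eqref{eq:nu:proj}--\eqref{eq:nu:pair}. Since the underlying \CwU\ structure, its reindexing, and its comprehension are untouched, each of those verifications transfers verbatim; nothing about the presence of universes interferes with the adjunction bookkeeping.

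The only genuinely new obligation — and the place where any care is needed — is to confirm that the resulting dependent right adjoint preserves universe levels, which is precisely what upgrades a \CwDRA\ to a \CwUDRA. This I would check straight from the definition: for $A\in\FamU{\L\Gamma}{n}$, size-preservation of the weak CwU morphism gives $\R A\in\FamU{\R\L\Gamma}{n}$, and reindexing along the unit $\eta$ preserves the level, so $\R_\Gamma A=(\R A)[\eta]\in\FamU{\Gamma}{n}$, exactly the condition in the definition of \CwUDRA. Because every step reduces to a single appeal to an already-established lemma, I do not expect a serious obstacle here; the main point of vigilance is merely to ensure that the level-indexing produced by Lemma~\ref{lem:univ-endo-weak-morphism} and the transpose construction of Lemma~\ref{CwDRAfromCwF+A} agree, so that the two refinements compose to yield genuinely level-graded structure rather than merely ungraded \CwDRA\ data.
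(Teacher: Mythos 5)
Your proposal is correct and follows exactly the paper's route: the paper's proof is the one-line "Combine Lemmas~\ref{CwDRAfromCwF+A} and \ref{lem:univ-endo-weak-morphism}," relying on the remark before Definition~\ref{def:amp} that the construction of Lemma~\ref{CwDRAfromCwF+A} extends to produce a CwUDRA from a CwU+A. Your additional check that $\R_\Gamma A=(\R A)[\eta]$ preserves universe levels is precisely the detail the paper leaves implicit, and it is verified correctly.
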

\begin{proof}
  Combine Lemmas~\ref{CwDRAfromCwF+A} and \ref{lem:univ-endo-weak-morphism}.
\end{proof}

For most of the presheaf examples considered in Section~\ref{sec:examples}, the dependent right adjoint is
obtained as the direct image of an essential geometric morphism
arising from a functor on the category on which the presheaves are
defined. 
We show that in this case, the right adjoint preserves universe levels
and hence gives a CwUDRA. 
For simplicity, we will restrict to one universe and show that the right adjoint
preserves smallness with respect to this. 

Let $U$ be a universe in an ambient set theory. We call the elements
of $U$, $U$-sets. A \emph{$U$-small category} is one where both the
sets of objects and the set of morphisms are $U$-small. 
Let us assume that $U$ is $U$-complete ---
it is closed under limits of $U$-small diagrams. A Grothendieck
universe in ZFC would satisfy these conditions. 


\begin{proposition}
Let $C,D$ be a $U$-small categories and $f:C\to D$ a functor between
them. The direct image $f_*$ of the induced geometric morphism
preserves size. In particular, for each endofunctor $f$, the direct image is a weak CwU morphism.
\end{proposition}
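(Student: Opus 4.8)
The plan is to use that the direct image of an essential geometric morphism is a right adjoint, to present its action on a family by the pointwise right Kan extension formula, and then to exhibit each fibre of the resulting family as a $U$-small limit of the (by hypothesis $U$-small) fibres of the original family, invoking $U$-completeness at the end.

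First I would set up the adjoint triple. The functor $f$ induces $f_!\dashv f^*\dashv f_*$ between $\widehat{C}$ and $\widehat{D}$, with $f^* = (-)\circ\op{f}$, so that $(f^*\mathbf{y}d)(c)=D(fc,d)$, and $f_*$ the right Kan extension, with the pointwise formula $(f_*X)(d)\cong\widehat{C}(f^*\mathbf{y}d,X)$. Being a right adjoint, $f_*$ preserves all limits, in particular finite ones, so by the argument of Theorem~\ref{functor-lex} it extends to a weak CwF morphism, under which the comprehension of $f_*A$ is presented by the display map $f_*\proj_A:f_*(\co{\Gamma}{A})\to f_*\Gamma$. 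A family $A$ over $\Gamma$ has size $U$ exactly when the fibres of $\proj_A$, namely the sets $A(c,x)$ for $c\in C$ and $x\in\Gamma(c)$, are $U$-small; hence it suffices to show that $f_*\proj_A$ again has $U$-small fibres.

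For the fibre computation I would unwind the Hom formula. A point of the fibre of $f_*\proj_A$ over $y\in(f_*\Gamma)(d)$ is a lift, through $\proj_A$, of the transformation $\psi:f^*\mathbf{y}d\Rightarrow\Gamma$ corresponding to $y$; that is, a natural choice, for every $c$ and every $t\in D(fc,d)$, of an element $a_{c,t}$ of the fibre $A(c,\psi_c(t))$. Such lifts are exactly the elements of the product $\prod_{c}\prod_{t\in D(fc,d)}A(c,\psi_c(t))$ satisfying the naturality equations, i.e. an equalizer of two maps between such products, indexed over the morphisms of $C$. This is a limit over a $U$-small diagram, since $\mathrm{Ob}(C)$, $\mathrm{Mor}(C)$ and each $D(fc,d)$ are $U$-small because $C$ and $D$ are, while its entries $A(c,\psi_c(t))$ are $U$-small because $A$ has size $U$. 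As $U$ is $U$-complete, the fibre is $U$-small. I would then conclude that $f_*$ preserves size, and that when $C=D$ it is an endofunctor, so that size preservation together with the weak CwF morphism structure above yields that $f_*$ is a weak CwU morphism.

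I expect the only real obstacle to be bookkeeping: correctly matching the fibre of $f_*\proj_A$ with the concrete $U$-small limit of fibres of $\proj_A$ through the pointwise Kan extension formula, and checking that $U$-smallness of the indexing diagram is sourced entirely from $C$, $D$ and the fibres of $A$. The last point is the delicate one, since the context $\Gamma$ is an arbitrary presheaf and need \emph{not} have $U$-small values; the argument works precisely because the value $\psi_c(t)\in\Gamma(c)$ is used only to select which $U$-small fibre of $A$ contributes, never as a factor in the product. Once this identification is in place, $U$-completeness does all the substantive work.
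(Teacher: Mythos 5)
Your proposal is correct and follows essentially the same route as the paper: both deduce the weak CwF morphism structure from $f_*$ being a (finite-limit-preserving) right adjoint, and both establish size preservation by identifying each fibre of $f_*\proj_A$ (in the paper's notation, each preimage $\inv{(f_*\alpha)}((x_g)_g)$) with a limit over the $U$-small comma category $(f\downarrow d)$ of the $U$-small fibres of the original family, then invoking $U$-completeness. Your presentation of that limit as an equalizer of products via the formula $(f_*X)(d)\cong\widehat{C}(f^*\mathbf{y}d,X)$ is just an unwinding of the paper's pointwise Kan extension formula, and your closing remark that only the fibres of $A$, never the values of $\Gamma$, need be $U$-small is exactly the point the paper's computation relies on.
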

\begin{proof}
Since $f_*$ is a right adjoint, we know that it induces a weak CwF morphism, and we
just need to show that it maps $U$-small families to $U$-small families. Recall first that 
the direct image $f_*$ is the (pointwise) right Kan
extension~\cite[A4.1.4]{johnstone:elephant} defined on
objects by the limit of the diagram \[
(\Ran_f F)d\defeq\lim (f\downarrow d) \stackrel{\pi_1}\to
C^{op}\stackrel{F}\to Uset,
\]
for $F\in \widehat{C}$ and $d\in D$. Here $(f\downarrow d)$ denotes the
\emph{comma category} consisting of pairs $(c;g:f(c)\to d)$. 

A family $\alpha:F\to G$, for $F,G\in \hat{C}$ is $U$-small if for each 
$c$ and each $x \in G(c)$ the set $\inv{\alpha_c}(x)$ is in $U$. Given
$(x_g)_{g \in f\downarrow d} \in f_*G(d) = (\Ran_f G)d$, the preimage $\inv{(f_*\alpha)}((x_g)_{g\in f\downarrow d})$ 
is the set
\[
\{(y_g)_{g\in f\downarrow d} \in (\Ran_f G)d \mid \forall g . \alpha_c(y_g) = x_g\}
\] 
which is the limit of the diagram associating to each $g$ the set 
$\inv\alpha_c(x_g)$. Since each of these sets are in $U$ by assumption 
and since also $f\downarrow d$  is in $U$, by the assumption of $U$ being closed under limits, 
also $\inv{(f_*\alpha)}((x_g)_{g\in f\downarrow d})$ is in $U$ as desired. 
\end{proof}

\medskip\noindent\textbf{Syntax}\quad At this stage it should hopefully be clear that one can refine and extend the syntax of modal dependent type theory from Section~\ref{sec:syntax} so that the resulting syntactic type theory can be modelled in a CwU+A.
The idea is, of course, to refine the judgement for well-formed types and to include a level $n$, so that it has the form $\istypeU{\Gamma}{A}{n}$, and
likewise for type equality judgements. For example, 
\begin{mathpar}
  \inferrule*{
    \istypeU{\Gamma,\lock}{A}n}{
    \istypeU{\Gamma}{\Box A}n}
\end{mathpar}

In addition to the existing rules for types (indexed with a level) and terms, we then also include:
\begin{mathpar}
  \inferrule*{ \ }
            {\istypeU{\Emp}{\U_n}{n+1}}
  \and
  \inferrule*{\istypeU{\Gamma}{A}{n}}
            {\isterm{\Gamma}{\code{A}}{\U_n}}
  \and
  \inferrule*{\isterm{\Gamma}{u}{\U_n}}
            {\istypeU{\Gamma}{\E u}{n}}
\end{mathpar}
Finally, we add the following type and term equality rules: 
\begin{mathpar}
  \inferrule*{\istypeU{\Gamma}{A}{n}}
            {\eqtypeU{\Gamma}{\E\code{A}}{A}{n}}
  \and
  \inferrule*{\isterm{\Gamma}{u}{\U_n}}
            {\eqterm{\Gamma}{\code{\E u}}{u}{\U_n}}
\end{mathpar}

As an example, there is a term
\[
  \widehat\Box \defeq \lambda x.\code{\Box\E(\open\,x)} \;:\; \Box\U_n\to\U_n
\]
which encodes the $\Box$ type constructor on the universe in the sense that 
\[
\E(\widehat\Box(\shut\, u)) = \Box (\E u)
\]
This is similar to the $\widehat{\later}$ operator of Guarded Dependent
Type Theory~\cite{BirkedalL:gdtt-conf}, which is essential to defining
guarded recursive types. 
Thus, 
$\widehat{\later}$ arises for general reasons quite unconnected to the specifics
of guarded recursion.

\section{Discussion}
\label{sec:discussion}

\subsection{Related Work}

Modal dependent type theory builds on work on the computational
interpretation of modal logic with \emph{simple} types. Some of this work
involves a standard notion of context; most relevantly to this paper, the calculus
for Intuitionistic K of~\citeasnoun{Bellin:Extended}, which employs
\emph{explicit substitutions} in terms. Departing from standard contexts,
Fitch-style calculi were introduced independently by~\citeasnoun{Borghuis:Coming}
and~\citeasnoun{Martini:Computational}. Recent work
by~\citeasnoun{Clouston:fitch-2018} argued that Fitch-style calculus can be
extended to a variety of different modal logics, and gave a sound categorical
interpretation by modelling the modality as a right adjoint. Another non-standard
notion of context are the \emph{dual contexts} introduced by~\citeasnoun{Davies:Modal}
for the modal logic Intuituionistic S4 of comonads.
Here a context $\Delta;\Gamma$ is understood as meaning
$\Box\Delta\land\Gamma$, so the structure in the context is modelled by the
modality itself, not its left adjoint. Recent work by~\citeasnoun{kavvos2017dual} has
extended this approach to a variety of modal
logics, including Intuitionistic K.

There exists recent work employing variants of dual contexts for modal
\emph{dependent} type theory, all involving (co)monads rather than the more
basic logic of this paper. Spatial type theory~\cite{shulman2018brouwer}, designed for
applications in homotopy type theory (see
also~\cite{wellen2017formalizing,licata2018internal}), extends
the Davies-Pfenning calculus for a comonad with both dependent types and a
second modality, a monad right adjoint to the comonad. Second, the calculus
for parametricity of~\citeasnoun{Nuyts:Parametric} uses \emph{three}
zones to extend Davies-Pfenning with a monad \emph{left} adjoint to the
comonad. They focus on $\Pi$- and $\Sigma$-types with modalised arguments,
but a more standard modality can be extracted by taking the second argument
of a modalised $\Sigma$-type to be the unit type. In both the above works the
leftmost modality is intended to itself be a right adjoint, so they potentially could
also be captured by a Fitch-style calculus. Third,~\citeasnoun{dePaiva:Fibrational} suggest a generalisation of Davies-Pfenning with
some unusual properties, as $\Box$ types carry an auxiliary typed variable and
$\Pi$-types may only draw their argument from the modal context. We finally
note the dual contexts approach has inspired the mode theories
of~\citeasnoun{licata2017fibrational}, but this line of work as yet does not
support a term calculus.

We however do not know how to apply the dual context approach to modal logics
where the modality is not a (co)monad. For example it is not obvious how to extend
Kavvos's simply-typed calculus for Intuitionistic K. This should be compared to the ease of
extending the simply-typed Fitch-style calculus with dependent types. We hope
that Fitch-style calculi continue to provide a relatively simple setting for
modal dependent type theory as we explore the extensions discussed in the
next subsection.

We are not aware of any successful extensions of the explicit substitution
approach to dependent types; our own experiments with this while developing
Guarded Dependent Type Theory~\cite{clouston2015programming} suggests
this is probably possible but becomes unwieldy with real examples. Far more
succesful was the Clocked Type Theory~\cite{bahr2017clocks} discussed in
Section~\ref{sec:examples}, which can now be seen to have rediscovered the
Fitch-style framework, albeit with the innovation of named locks to control
fixed-point unfoldings. That work provides the inspiration for the more
foundational developments of this paper.


\subsection{Future work}

We wish to develop operational semantics for dependent Fitch-style calculi,
and conjecture that standard techniques for sound normalisation and
canonicity can be extended, as was possible for simply-typed Fitch-style
calculi~\cite{Borghuis:Coming,Clouston:fitch-2018}, and for Clocked Type
Theory~\cite{bahr2017clocks}. Such results should then lead to practical
implementation.

The modal axiom Intuitionistic K was used in this paper because it provides a
basic notion of modal necessity and holds of many useful models. Nonetheless
for particular applications we will want to develop Fitch-style calculi
corresponding to more particular logics. There can be no algorithm for
converting additional axioms to well-behaved calculi, but we know that
Fitch-style calculi are extremely versatile in the simply typed
case~\cite{Clouston:fitch-2018}, and Clocked Type Theory provides one
example of this with dependent types. In particular we are interested in
Fitch-style calculi with multiple interacting modalities, each of which is assigned
its own lock; we hope to develop guarded type theory with both $\later$ and
$\Box$ modalities in this style.

The notion of CwF with a weak CwF endomorphism
(Definition~\ref{def:Gir-map}) is more general than our CwF+A, as it does
not require the existence of a left adjoint. Because a  weak CwF endomorphism must
preserves products, it appears to be a rival candidate for a model of dependent type theory
with the K axiom. However we do not know how to capture this class of models in syntax.
Understanding this would be valuable because
\emph{truncation}~\cite{DBLP:journals/logcom/AwodeyB04},
considered as an endofunctor for example on sets, defines such a morphism but
is not a right adjoint. Truncation allows one to move between general types and 
propositions. For example combining it with guarded types would allow us to
formalise work in this field that makes that
distinction~\cite{birkedal2011first,clouston2015programming}.

\bibliographystyle{agsm}
\bibliography{drat}

\newpage

\end{document}